\newtheorem{theorem}{Theorem}
\newtheorem{lemma}{Lemma}
\newtheorem{observation}{Observation}
\newtheorem{definition}{Definition}
\def\calC{\mathcal{C}}
\def\vl{\ell_{\perp}}
\title{Computing Maximum Cliques in Unit Disk Graphs\thanks{A preliminary version of this paper will appear in {\em Proceedings of the 37th Canadian Conference on Computational Geometry (CCCG 2025)}. This research was supported in part by NSF under Grant CCF-2300356.} 
}
\author{Anastasiia Tkachenko\thanks{Kahlert School of Computing,
University of Utah, Salt Lake City, UT 84112, USA. {\tt anastasiia.tkachenko@utah.edu}}
\and
Haitao Wang\thanks{Kahlert School of Computing,
University of Utah, Salt Lake City, UT 84112, USA. {\tt haitao.wang@utah.edu}}
}
\date{}
\begin{document}

\maketitle

\begin{abstract}
Given a set $P$ of $n$ points in the plane, the unit-disk graph $G(P)$ is a graph with $P$ as its vertex set such that two points of $P$ have an edge if their Euclidean distance is at most $1$. We consider the problem of computing a maximum clique in $G(P)$. The previously best algorithm for the problem runs in $O(n^{7/3+o(1)})$ time. 
We show that the problem can be solved in $O(n \log n + n K^{4/3+o(1)})$ time, where $K$ is the maximum clique size. The algorithm is faster than the previous one when $K=o(n)$. In addition, if $P$ is in convex position, we give a randomized algorithm that runs in $O(n^{15/7+o(1)})= O(n^{2.143})$ worst-case time and the algorithm can compute a maximum clique with high probability. For points in convex position, one special case  we solve is when a point in the maximum clique is given; we present an $O(n^2\log n)$ time (deterministic) algorithm for this special case. 
\end{abstract}

{\em Keywords:} Cliques, unit-disk graphs, convex position

\section{Introduction}
\label{sec:intro}
A {\em clique} in an undirected graph is a subset of vertices such that every two vertices in the subset share an edge. 
Finding maximum cliques in graphs is a classical problem, which has applications across various domains, including bioinformatics~\cite{ref:Malod-DogninMa10}, scheduling~\cite{ref:WeideAn10}, coding theory~\cite{ref:EtzionGr98}, and social network analysis~\cite{ref:BalasundaramCl11} (see also the survey~\cite{ref:WuA15}). Beyond its practical significance, the maximum clique problem is also closely connected to a number of important combinatorial optimization problems such as clique partitioning~\cite{ref:WangSo07}, graph clustering~\cite{ref:SchaefferGr07}, and set packing~\cite{ref:WuAn12}. These problems can often be directly formulated as instances of the maximum clique problem or contain subproblems that require computing a maximum clique.

In general, the maximum clique problem is NP-hard for many classes of graphs. This intractability persists even in geometrically constrained settings, such as intersection graphs of ellipses and triangles~\cite{ref:AmbühlTh05}, hybrid models combining unit disks with axis-aligned rectangles~\cite{ref:BonnetMa20, ref:EspenantFi23}, and intersection graphs of rays~\cite{ref:CabelloTh13}. In each of these cases, computing a maximum clique remains as hard as in general graphs.

Nevertheless, for several classes of geometric intersection graphs, the maximum clique problem becomes tractable. Polynomial time algorithms are known, for instance, for intersection graphs of axis-aligned rectangles~\cite{ref:ImaiFi83}, circle graphs~\cite{ref:NashAn10}, circular-arc graphs~\cite{ref:ApostolicoFi87}, and unit-disk graphs~\cite{ref:ClarkUn90, ref:EppsteinIt94, ref:EspenantFi23, ref:KeilTh25}. In this paper, we focus on the maximum clique problem in unit-disk graphs. Unit-disk graphs arise naturally in modeling wireless communication networks, where devices with identical transmission ranges can interact if they are within range of each other~\cite{ref:ClarkUn90, ref:PerkinsHi94, ref:PerkinsAd99, ref:BalisterCo05}. Relatively simple geometric structure of unit-disk graphs, paired with real-world relevance, has made them a central object of study in computational geometry and related fields. 

Let $P$ be a set of $n$ points in the plane. The {\em unit-disk graph} of $P$, denoted $G(P)$, is defined as the graph with vertex set $P$ such that two points of $P$ have an edge if their Euclidean distance is at most 1. Equivalently, $G(P)$ is the intersection graph of congruent disks of radius $1/2$ centered at the points in $P$, that is, each disk defines a vertex and two vertices are connected if their corresponding disks intersect. A \textit{maximum clique} of $G(P)$ is thus a largest subset of points of $P$ in which every two points are of distance at most $1$. Given $P$, we consider the problem of computing a maximum clique of $G(P)$. 

\subsection{Previous work}

Clark, Colbourn, and Johnson~\cite{ref:ClarkUn90} proposed the first polynomial time algorithm to compute a maximum clique in a unit-disk graph. Their algorithm runs in $O(n^{4.5})$ time, based on the following idea. For each edge $(p,q)$ of $G(P)$ (i.e., $|pq|\leq 1$, where $|pq|$ is the distance of $p$ and $q$), let $L(p,q)$ denote the intersection of the disk centered at $p$ with radius $|pq|$ and the disk centered at $q$ with radius $|pq|$; $L(p,q)$ is called a {\em lens}. A key insight of \cite{ref:ClarkUn90} was that the subgraph $G_{L(p,q)}(P)$ induced by points of $P$ inside $L(p,q)$ is cobipartite (a graph is cobipartite if its complement graph is bipartite). Consequently, finding a maximum clique in $G_{L(p,q)}(P)$ reduces to finding a maximum independent set in its complement graph, which is bipartite. The latter problem in turn reduces to computing a maximum matching in the bipartite graph, which can be solved in $O(n^{2.5})$ time with $n$ as the number of vertices in the graph~\cite{ref:HopcroftAn73}. If $(p,q)$ is a farthest pair of points in a maximum matching $M$ of $G(P)$, then a crucial observation is that all points of $M$ must be in the lens $L(p,q)$. As such, by finding maximum cliques in the subgraphs for all lenses of all $O(n^2)$ of edges $G(P)$, one can compute a maximum clique in $G(P)$ in $O(n^{4.5})$ time.

Later, Aggarwal, Imai, Katoh, and Suri~\cite{ref:AggarwalFi91} gave an algorithm that can compute a maximum clique in a cobipartite subgraph of $G(P)$ (or equivalently compute a maximum independent set in its complement graph) in $O(n^{1.5}\log n)$ time by using the data structures in \cite{ref:HershbergerFi91,ref:ImaiEf86}. Plugging this algorithm into the above framework of \cite{ref:ClarkUn90} leads to an $O(n^{3.5}\log n)$ time algorithm for computing a maximum clique in $G(P)$. 

Eppstein and Erickson~\cite{ref:EppsteinIt94} subsequently derived an improved algorithm of $O(n^3 \log n)$ time by introducing an ordering of lenses and leveraging dynamic data structures. More specifically, the algorithm rotates a lens around each point in $P$, changing the corresponding cobipartite subgraph of the lens whenever a point enters or exits it.
At each change, instead of computing a maximum clique from scratch, the authors developed a dynamic data structure to maintain the maximum clique by searching an alternating path in the graph, so that each point update (insertion/deletion) to the subgraph can be handled in $O(n\log n)$ time.


Recently Espenant, Keil, and Mondal~\cite{ref:EspenantFi23} made a breakthrough and proposed the first subcubic time algorithm and their algorithm runs in $O(n^{2.5} \log n)$ time. It employs a divide-and-conquer approach. In the merge step, the authors discovered that it is sufficient to consider $O(n)$ lenses and compute a maximum clique for each of them, which takes $O(n^{2.5}\log n)$ time using the algorithm of \cite{ref:AggarwalFi91}. Hence, the merge step can be completed in $O(n^{2.5}\log n)$ time, so is the overall algorithm. 
As noted in \cite{ref:KeilTh25}, Timothy Chan observed that finding a maximum clique in a cobipartite graph can be solved faster in $O(n^{4/3+o(1)})$ time by decomposing $G(P)$ into bipartite cliques~\cite{ref:KatzAn97, ref:WangIm23} and applying the maximum matching algorithm~\cite{ref:FederCl95,ref:ChenMa22}. Plugging this new algorithm into the algorithm of \cite{ref:EspenantFi23} leads to an $O(n^{7/3 + o(1)})$ time algorithm to compute a maximum clique in $G(P)$.

\subsection{Our results}
In this paper, by using a grid and the above $O(n^{7/3 + o(1)})$ time algorithm, we show that computing a maximum clique in $G(P)$ can be done in  $O\bigl(n \log n + nK^{4/3 + o(1)}\bigr)$ time, where $K$ is the maximum clique size. If $K=o(n)$, then the runtime is asymptotically smaller than the previous algorithm. In particular, for sufficiently small $k$, e.g., $k=O(\sqrt{\log n})$, the runtime is bounded by $O(n\log n)$. 

We also consider a {\em convex position} case where points of $P$ are in a convex position, i.e., every point of $P$ is a vertex in the convex hull of $P$. In this case, we first show that if a point in a maximum clique of $P$ is provided, then we can compute a maximum clique in $O(n^2\log n)$ time. Combining this algorithm with the above $O\bigl(n \log n + nK^{4/3 + o(1)}\bigr)$ time algorithm, we present a randomized algorithm that can compute a clique in $G(P)$ in worst-case $O(n^{15/7+o(1)})= O(n^{2.143})$ time, and the clique is a maximum clique with high probability (i.e., with probability $1-1/n^c$ for an arbitrarily large constant $c$). 

Note that, although a constrained case, geometric problems for points in convex position have attracted much attention, e.g., Voronoi diagrams~\cite{ref:AggarwalA89}, $k$-center~\cite{ref:ChoiEf23,ref:TkachenkoDo25}, independent set~\cite{ref:SingireddyAl23,ref:TkachenkoDo25}, dispersion~\cite{ref:SingireddyAl23,ref:TkachenkoDo25}, dominating set~\cite{ref:TkachenkoDo25}, $\epsilon$-net~\cite{ref:ChazelleIm93}, triangulation~\cite{ref:LingasOn86}, Steiner tree~\cite{ref:RichardsA90}, etc. 
In particular, when the disks are in convex position (e.g., every disk appears in the convex hull of all disks; but the disks may have different radii), it has been reported that a maximum clique can be computed in polynomial time~\cite{ref:ÇağırıcıMa18}.

\paragraph{Outline.} The rest of the paper is organized as follows. After introducing some notation in Section~\ref{sec:pre}, we present our algorithm for the general case (i.e., points of $P$ are not necessarily in convex position) in Section~\ref{sec:kClique_GP}. In Section~\ref{sec:MaxCl_CP}, we describe the algorithm for the convex position case with the special assumption that a point in a maximum clique is provided. Section~\ref{sec:convexnopoint} finally solves the convex position case without the special assumption. 

\section{Preliminaries}
\label{sec:pre}

We introduce some notation that will be used throughout the paper, in addition to those already defined in Section~\ref{sec:intro}, such as $P$, $n$, and $G(P)$.

A \emph{unit disk} is any disk with radius $1$. For a point $p$ in the plane, we denote by $D_p$ the unit disk centered at $p$, and $\partial D_p$ the bounding circle of the disk. We use $x(p)$ and $y(p)$ to denote the $x$- and $y$-coordinates of $p$, respectively. 
Let $\ell_{\perp}(p)$ denote the vertical line through $p$.

For any two points $p$ and $q$ in the plane, let $|pq|$ represent their Euclidean distance, and $\overline{pq}$ the line segment with endpoints $p$ and $q$. 

For any subset $P'\subseteq P$, we let $G(P')$ denote the subgraph of $G(P)$ with vertex set $P'$. For simplicity, we call a subset $P'\subseteq P$ a {\em clique} if $G(P')$ is a complete graph, i.e., the distance between every two points of $P'$ is at most $1$. 


For any region $R$ in the plane, 
we use $P(R) = P \cap R$ to denote the subset of points from $P$ that lie in $R$. 



\section{The general case}
\label{sec:kClique_GP}
In this section, we present our algorithm for computing a maximum clique in the unit-disk graph $G(P)$, without assuming that $P$ is in convex position. We first solve a {\em decision problem}: Given a number $k$, determine whether $G(P)$ has a clique of size $k$ (and if so, report such a clique). We show that after $O(n\log n)$ time preprocessing, given any $k$, the decision problem can be solved in $O(nk^{4/3+o(1)})$ time. Later we will show that a maximum clique of $G(P)$ can be computed by an exponential search and using the decision algorithm. 

\subsection{The decision problem}

We first construct a set $\calC$ of grid cells to capture the proximity information for the points of $P$. 
The technique of using grids has been widely used in various algorithms for solving problems in unit-disk graphs~\cite{ref:WangRe23,ref:WangNe20, ref:ChanAl16,ref:WangUn23,ref:WangDy25}. 

The set $\calC$ has the following properties (see Figure~\ref{fig:C_N(C)}): 
(1) Each cell of $\calC$ is an axis-parallel square of side lengths $1/2$. This implies that the distance between every two points in each cell is less than $1$ and therefore points of $P$ in the cell form a clique. (2) The union of all cells of $\calC$ covers all the points of $P$. 
(3) Each cell $C\in \calC$ is associated with a subset $N(C)\subseteq \calC$ of $O(1)$ cells (called {\em neighboring cells} of $C$) such that for any point $q\in C$, $P(D_q) \subseteq \bigcup_{C'\in N(C)}P(C')$. 
(4) Each cell $C'\in \calC$ is in $N(C)$ for $O(1)$ cells $C\in \calC$. 

\begin{figure}
\begin{minipage}[t]{\textwidth}
\begin{center}
\includegraphics[height=2.0in]{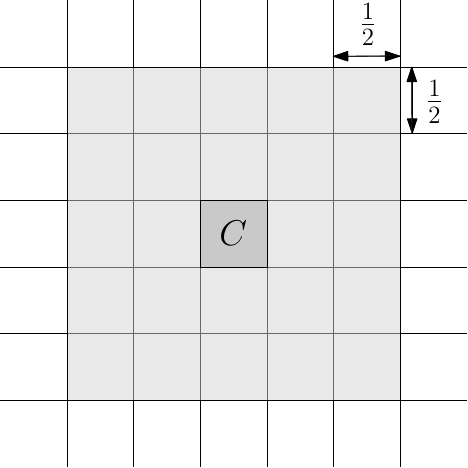}
\caption{The gray cells are neighboring cells of $C$.}
\label{fig:C_N(C)}
\end{center}
\end{minipage}
\end{figure}

An algorithm was provided in \cite{ref:WangUn23} for the following lemma. 

\begin{lemma}
\label{lem:grid}{\em \cite{ref:WangUn23}}
   The set $\calC$ of size $O(n)$, along with $P(C)$ and $N(C)$ for all cells $C \in \calC$, can be computed in $O(n\log n)$ time and $O(n)$ space.
\end{lemma}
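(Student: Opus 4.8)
The plan is to build $\mathcal{C}$ directly from the canonical half-integer grid on the plane, namely the grid whose cells are the axis-parallel squares $[i/2,(i+1)/2)\times[j/2,(j+1)/2)$ for integers $i,j$. Property~(1) is then immediate: a square of side $1/2$ has diameter $\sqrt{2}/2 < 1$, so every pair of points in a common cell is within distance $1$ and hence forms a clique. To get $\mathcal{C}$ itself together with $P(C)$, the first step is to compute, for each point $p\in P$, the integer pair $(\lfloor 2x(p)\rfloor,\lfloor 2y(p)\rfloor)$ identifying its cell; this takes $O(n)$ time in total. I would then sort these $n$ index pairs lexicographically in $O(n\log n)$ time, which groups the points by cell; the distinct index pairs encountered are exactly the nonempty cells, and we take $\mathcal{C}$ to be this set of nonempty cells. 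This gives $|\mathcal{C}|\le n$, establishes property~(2) (every point lies in its own cell, which is in $\mathcal{C}$), and yields all the lists $P(C)$ in $O(n)$ additional time, all within $O(n\log n)$ time and $O(n)$ space.

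For property~(3), for a cell $C$ with index pair $(i,j)$, take $N(C)$ to be the set of those cells in $\mathcal{C}$ whose index pair $(i',j')$ satisfies $|i-i'|\le 3$ and $|j-j'|\le 3$ (a $7\times 7$ block of grid positions centered at $C$). The key geometric fact to verify is that if $q\in C$ and $|qr|\le 1$, then $r$ lies in one of these cells: since $q$ is within a $1/2\times 1/2$ cell and $r$ is within distance $1$ of $q$, the point $r$ has both coordinates within $1+1/2 = 3/2$ of the lower-left corner of $C$, i.e.\ within $3$ half-units, so its index pair differs from $(i,j)$ by at most $3$ in each coordinate. Hence $P(D_q)\subseteq\bigcup_{C'\in N(C)}P(C')$. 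Since a $7\times 7$ block has $49=O(1)$ positions, $|N(C)|=O(1)$; and by symmetry of the relation "differ by at most $3$ in each coordinate," each cell $C'$ belongs to $N(C)$ for at most $49$ cells $C$, giving property~(4). (The constant $3$ is not tight; any constant that safely covers $3/2$ half-units works, and one could shrink it, but correctness only needs \emph{some} $O(1)$ bound.)

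Finally I would address the computation of the $N(C)$ lists within the stated bounds — this is the only step that needs a little care, since naively scanning all $49$ candidate index pairs for each $C$ and searching for them among the nonempty cells could cost an extra $\log n$ factor per lookup, which is still fine for the $O(n\log n)$ budget but worth stating cleanly. Concretely: keep the sorted array of nonempty cells and, for each $C\in\mathcal{C}$ and each of the $O(1)$ candidate neighbor index pairs, locate it by binary search in $O(\log n)$ time; over all $O(n)$ cells and $O(1)$ candidates each this is $O(n\log n)$ time and $O(n)$ space total, matching the claim. (Alternatively one can sweep the sorted list with a window, but the binary-search version is simplest to justify.) The main obstacle, such as it is, is purely the bookkeeping to confirm the $O(1)$ bounds in properties~(3) and~(4) and that all four properties hold simultaneously for the single choice of grid and neighborhood radius above; there is no deep difficulty, which is why the lemma is quoted from \cite{ref:WangUn23} rather than reproved here.
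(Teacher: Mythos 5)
The paper does not prove this lemma at all---it imports it directly from \cite{ref:WangUn23}---and your construction (half-integer grid, cell indices via $\lfloor 2x\rfloor,\lfloor 2y\rfloor$, lexicographic sort to group points and extract nonempty cells, constant-size index-offset blocks for $N(C)$ located by binary search) is exactly the standard argument used there. Your reasoning is correct, including the deliberately loose choice of a $7\times 7$ neighborhood (a $5\times 5$ block already suffices since a unit disk centered in a cell spans at most $3/2$ in each coordinate direction), so there is nothing to fix.
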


Our preprocessing step is to construct the set $\calC$ by Lemma~\ref{lem:grid}. Given a number $k$, in the following we present our algorithm to determine whether $G(P)$ has a clique of size $k$. 

For any cell $C\in \calC$, if $|P(C)|\geq k$, we immediately find a clique of size $k$ as the points of $P(C)$ form a clique. 
In the following, we assume $|P(C)|<k$ for all cells $C\in \calC$. 

Suppose that $G(P)$ contains a clique $M$ of size $k$. Let $C$ be a cell that contains a point $p$ of $M$. Then, $M\subseteq P(D_p)$. By the third property of $\calC$, $P(D_p)\subseteq \bigcup_{C'\in N(C)}P(C')$. Hence, $M\subseteq \bigcup_{C'\in N(C)}P(C')$. Consequently, $M$ can be found in $G(P_C)$ for $P_C=\bigcup_{C'\in N(C)}P(C')$. Since $N(C)$ has $O(1)$ cells $C'$ and $|P(C')|<k$, we have $|P_C|=O(k)$. As such, applying the previous algorithm to $P_C$ can compute $M$ in $O(k^{7/3+o(1)})$ time. 

With the above observation, our algorithm works as follows. We compute $P_C$ for each cell $C\in \calC$. This takes $O(n)$ time in total as $|N(C)|\leq 1$ and each cell $C'\in \calC$ is in $N(C)$ for $O(1)$ cells $C\in \calC$ (the fourth property of $\calC$). If $|P_C|\geq k$, then we compute a maximum clique in $G(P_C)$, which takes $O(k^{7/3+o(1)})$ time. If the size of the computed clique is at least $k$, then we finish the algorithm and return the clique. The total time of the algorithm is $\sum_{C\in \calC, |P_C|\geq k} O(k^{7/3+o(1)})$, which is $O(nk^{4/3+o(1)})$ due to the following observation. 

\begin{observation}
    $\sum_{C\in \calC, |P_C|\geq k} k^{7/3+o(1)}=O(nk^{4/3+o(1)}).$
\end{observation}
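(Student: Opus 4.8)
The plan is to bound the sum $\sum_{C \in \calC,\, |P_C| \geq k} k^{7/3+o(1)}$ by counting how many cells $C$ can satisfy $|P_C| \geq k$ and then multiplying by the per-cell cost $k^{7/3+o(1)}$. The key observation is that each such "heavy" cell $C$ contributes at least $k$ to a quantity that is globally bounded by $O(n)$, so there can be at most $O(n/k)$ of them. Multiplying $O(n/k)$ by $k^{7/3+o(1)}$ then gives $O(n \cdot k^{4/3+o(1)})$, which is exactly the claimed bound.

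First I would set up the counting argument. For each cell $C$ with $|P_C| \geq k$, recall $P_C = \bigcup_{C' \in N(C)} P(C')$, so $|P_C| \leq \sum_{C' \in N(C)} |P(C')|$. Hence $\sum_{C:\,|P_C|\geq k} k \leq \sum_{C:\,|P_C|\geq k} |P_C| \leq \sum_{C \in \calC} \sum_{C' \in N(C)} |P(C')|$. Now I would swap the order of summation: each cell $C' \in \calC$ appears in $N(C)$ for only $O(1)$ cells $C$ (the fourth property of $\calC$), so $\sum_{C \in \calC} \sum_{C' \in N(C)} |P(C')| = O(1) \cdot \sum_{C' \in \calC} |P(C')| = O(n)$, since the cells of $\calC$ together cover $P$ and each point lies in exactly one cell (property 1 and 2), so $\sum_{C' \in \calC} |P(C')| = |P| = n$. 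Therefore the number of heavy cells is at most $O(n)/k = O(n/k)$.

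Finally I would assemble the bound: $\sum_{C \in \calC,\, |P_C| \geq k} k^{7/3+o(1)} \leq \frac{O(n)}{k} \cdot k^{7/3+o(1)} = O(n\, k^{4/3+o(1)})$, where absorbing the constant from the $O(n/k)$ count into the $o(1)$ exponent (or simply keeping it as a multiplicative constant) is harmless. The only mild subtlety to be careful about is the $k^{o(1)}$ factor — I would note that it is a subpolynomial factor uniform across all cells, so it factors out of the sum cleanly and the arithmetic $k^{7/3} / k = k^{4/3}$ is all that is really happening. I do not expect any genuine obstacle here; the argument is a standard "each heavy object uses up a chunk of a linear budget" double-counting, and the main point to state explicitly is the swap of summation order justified by property (4) of $\calC$.
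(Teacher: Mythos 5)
Your proof is correct and follows essentially the same route as the paper: both arguments use property (4) of $\calC$ to conclude that $\sum_{C}|P_C|=O(n)$, deduce that at most $O(n/k)$ cells can have $|P_C|\geq k$, and multiply by the per-cell cost $k^{7/3+o(1)}$ to obtain the bound. The only cosmetic difference is that you make the summation swap explicit, whereas the paper states the equivalent fact that each point of $P$ belongs to $P_C$ for only $O(1)$ cells $C$; this is the same double-counting step.
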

\begin{proof}
Let $\calC'$ denote the subset of cells $C\in \calC$ with $|P_C|\geq k$. 
We claim that $|\calC'| = O(n/k)$. To see this, since each cell $C'\in \calC$ is in $N(C)$ for $O(1)$ cells of $C\in \calC$ (the fourth property of $\calC$), each point of $P$ is in $P_C$ for $O(1)$ cells of $C\in \calC$. Hence, $\sum_{C\in \calC'}|P_C|=O(n)$. Since $|P_C|\geq k$ for each cell $C\in \calC'$, we obtain $|\calC'|=O(n/k)$. This leads to the observation. 
\end{proof}

We thus have the following lemma. 

\begin{lemma}\label{lem:decision}
After $O(n\log n)$ time preprocessing, given any number $k$, we can decide whether $G(P)$ has a clique of size $k$ (and if so, report such a clique) in $O(nk^{4/3+o(1)})$ time. 
\end{lemma}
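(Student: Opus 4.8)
The plan is to combine the grid decomposition of Lemma~\ref{lem:grid} with the known $O(n^{7/3+o(1)})$-time maximum-clique algorithm, applied \emph{locally} to small neighborhoods of cells. First I would do the preprocessing: invoke Lemma~\ref{lem:grid} to build, in $O(n\log n)$ time and $O(n)$ space, the cell set $\calC$ together with $P(C)$ and $N(C)$ for every $C\in\calC$. This is the only part that costs $O(n\log n)$; everything afterwards will be charged to the per-query bound, so the claimed preprocessing/query split follows immediately once the query procedure is analyzed.

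Given $k$, the first step is to test whether some cell is already ``full'': if $|P(C)|\ge k$ for some $C\in\calC$, then since each cell is a $\tfrac12\times\tfrac12$ square, its diameter is $\tfrac{\sqrt2}{2}<1$, so $P(C)$ is itself a clique and we report any $k$ of its points. Otherwise we may assume $|P(C)|<k$ for every $C$. The crux is the localization argument: if $M$ is a clique of size $k$ and $p\in M$ lies in a cell $C$, then every point of $M$ is within distance $1$ of $p$, i.e. $M\subseteq P(D_p)$, and by property~(3) of $\calC$ we have $P(D_p)\subseteq\bigcup_{C'\in N(C)}P(C')=:P_C$. Hence the whole of $M$ lies inside $G(P_C)$; and since $|N(C)|=O(1)$ and each $|P(C')|<k$, we get $|P_C|=O(k)$.

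So the algorithm iterates over all cells $C\in\calC$, forms $P_C$, and whenever $|P_C|\ge k$ runs the $O(n^{7/3+o(1)})$-time algorithm on the instance $P_C$, at a cost of $O(k^{7/3+o(1)})$ per such cell; if any call returns a clique of size $\ge k$, we report it. Correctness follows from localization: if $G(P)$ has a $k$-clique, then a cell containing one of its points triggers a successful call. Forming all the sets $P_C$ takes $O(n)$ time in total because, by property~(4), each $P(C')$ is reused in only $O(1)$ of the $P_C$'s. The time bound is then exactly the Observation: property~(4) implies each point of $P$ lies in $O(1)$ of the sets $P_C$, so $\sum_C|P_C|=O(n)$; thus only $O(n/k)$ cells can have $|P_C|\ge k$; multiplying by the $O(k^{7/3+o(1)})$ cost per cell gives $O(nk^{4/3+o(1)})$.

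I expect the only genuine subtlety to be the localization step — making sure a clique that straddles many cells is nonetheless fully captured inside the neighbor-union $P_C$ of any single one of its cells — which is precisely what property~(3) of $\calC$ is engineered to provide; the $O(k^{7/3+o(1)})$ maximum-clique subroutine is used as a black box from prior work, and the final summation is routine once properties~(3) and~(4) are in hand.
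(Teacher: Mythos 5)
Your proposal is correct and follows essentially the same approach as the paper: the grid preprocessing of Lemma~\ref{lem:grid}, the full-cell shortcut, the localization of a $k$-clique into $P_C=\bigcup_{C'\in N(C)}P(C')$ via property~(3), the black-box $O(k^{7/3+o(1)})$ call per heavy cell, and the $O(n/k)$ counting argument from property~(4). No gaps.
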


\subsection{The optimization problem: Finding a maximum clique}

Using Lemma~\ref{lem:decision}, the following theorem gives an algorithm to compute a maximum clique. 

\begin{theorem}\label{thm:k-clique}
Given a set $P$ of $n$ points in the plane, we can find a maximum clique in the unit-disk graph $G(P)$
in $O(n\log n+ nK^{4/3+o(1)})$ time, where $K$ is the maximum clique size.  
\end{theorem}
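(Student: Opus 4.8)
The plan is to turn the decision procedure of Lemma~\ref{lem:decision} into an optimization algorithm by an output‑sensitive search. I would run the $O(n\log n)$-time preprocessing of Lemma~\ref{lem:decision} once, and then use the decision procedure as an oracle. The key structural remark is that the predicate ``$G(P)$ has a clique of size $k$'' is monotone in $k$: any size-$k$ subset of a size-$K$ clique is again a clique, so the oracle answers YES for every $k\le K$ and NO for every $k>K$. Hence I would first do a doubling (exponential) search, calling the oracle with $k=2,4,8,\dots,2^{i},\dots$ until the first NO answer, say at $k=2^{j}$; by monotonicity the previous call at $2^{j-1}$ answered YES, so $2^{j-1}\le K<2^{j}$, in particular $2^{j}\le 2K$. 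The $i$-th call costs $O(n\,(2^{i})^{4/3+o(1)})$ by Lemma~\ref{lem:decision}, and since the exponent $4/3$ exceeds $1$ these costs form an increasing geometric series whose sum is dominated by its last term, $O(n\,(2^{j})^{4/3+o(1)})=O(n\,(2K)^{4/3+o(1)})=O(nK^{4/3+o(1)})$.

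The last successful call only certifies a clique of size $2^{j-1}$, which need not be maximum, so one more phase is required. I would binary-search the interval $\{2^{j-1},\dots,2^{j}-1\}$, which is known to contain $K$, with the same oracle, retaining the clique returned by the largest queried value that answered YES; that value equals $K$ and the retained clique is a maximum clique. This uses $O(\log K)$ further calls, each costing at most $O(n\,(2^{j})^{4/3+o(1)})=O(nK^{4/3+o(1)})$, for a total of $O(nK^{4/3+o(1)}\log K)$, and the $\log K$ factor is absorbed into the $K^{o(1)}$ term. Adding the $O(n\log n)$ preprocessing yields the claimed $O(n\log n+nK^{4/3+o(1)})$ bound.

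I do not anticipate a real obstacle: this is a routine output-sensitive reduction, and the only points needing a little care are verifying that the per-call costs telescope into a geometric series dominated by the final term (which uses only that the exponent exceeds $1$ and that the $o(1)$ denotes a subpolynomial factor), absorbing the $O(\log K)$ overhead into $K^{o(1)}$, and the degenerate case $K=1$, where the doubling phase halts immediately at $k=2$. I would also point out that the search can be bypassed entirely: since every cell of $\calC$ is a $1/2\times 1/2$ square, $P(C)$ is itself a clique, hence $|P(C)|\le K$ and $|P_C|=O(K)$ for \emph{every} cell $C$; one may then run the $O(m^{7/3+o(1)})$-time subroutine directly on each $G(P_C)$ and output the largest clique found. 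Its total cost $\sum_{C\in\calC}|P_C|^{7/3+o(1)}=\sum_{C\in\calC}|P_C|\cdot O(K^{4/3+o(1)})=O(nK^{4/3+o(1)})$ follows from the same counting as in the proof of the preceding observation (each point of $P$ lies in $O(1)$ of the sets $P_C$), and correctness is immediate since any global maximum clique $M$ contains a point $p$ and then $M\subseteq P(D_p)\subseteq P_C$ for the cell $C$ containing $p$.
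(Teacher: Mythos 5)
Your main argument is correct and essentially identical to the paper's proof: exponential (doubling) search followed by binary search over $[2^{j-1},2^j]$ using the decision oracle of Lemma~\ref{lem:decision}, with the $O(\log K)$ overhead absorbed into the $K^{o(1)}$ factor. Your closing remark that the search can be bypassed entirely --- since each grid cell's points form a clique we have $|P_C|=O(K)$ for every cell, so one can run the $O(m^{7/3+o(1)})$ subroutine on each $G(P_C)$ directly and the same counting gives $\sum_C |P_C|^{7/3+o(1)}=O(nK^{4/3+o(1)})$ --- is also valid and is arguably a cleaner route to the same bound than the search the paper performs.
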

\begin{proof}
We first perform the preprocessing step of Lemma~\ref{lem:grid}, which takes $O(n\log n)$ time. Then, using the decision algorithm in Lemma~\ref{lem:decision}, we can compute a maximum clique by doing 
exponential search. Specifically, we call the decision algorithm for $k=1,2,4,8$ until a value $k$ such that the decision algorithm determines that $G(P)$ has a clique of size $k$ but does not have a clique of size $2k$. Let $k'=k$. Hence, we have $K\in [k',2k']$.  The above calls the decision algorithm $O(\log k')=O(\log K)$ times. We then do binary search with $k\in [k',2k']$ using the decision algorithm. This can compute a maximum clique by calling the decision algorithm another $O(\log K)$ times. As such, in total we call the decision algorithm $O(\log K)$ times, which takes  $O(nK^{4/3+o(1)}\log K)$. 

Therefore, the time complexity of the overall algorithm is $O(n\log n+ nK^{4/3+o(1)}\log K)$. Observe that the $\log K$ factor is subsumed by $K^{o(1)}$. The theorem thus follows. 
\end{proof}

\section{The convex position case with a given point}
\label{sec:MaxCl_CP}

In this section, we consider the convex position case with a special assumption that a point $p^*$ in a maximum clique of $G(P)$ is also provided. With $p^*$, we give an algorithm that can compute a maximum clique in $G(P)$ in $O(n^2\log n)$ time. We make a general position assumption that no three points of $P$ lie on the same line. 

\paragraph{Algorithm overview.}
Let $M^*$ denote a maximum clique of $G(P)$ containing $p^*$. Our algorithm will maintain a subset $S\subseteq P$ such that $G(S)$ is cobipartite (note that $G(S)$ is cobipartite if and only $S$ can be partitioned into two subsets $S_1$ and $S_2$ that are both cliques). During the course of the algorithm, there are $O(n)$ updates to $S$ and each update is either to insert a point of $P$ into $S$ or delete a point from $S$, and after each update $S$ is always cobipartite. We use the data structure of Eppstein and Erickson~\cite{ref:EppsteinIt94} (referred to as the {\em EE data structure}) to maintain a maximum clique of $S$ so that each  update can be handled in $O(n\log n)$ time (i.e., after each update, a maximum clique of $S$ can be computed in $O(n\log n)$ time). Since there are $O(n)$ updates to $S$, the total time of the algorithm is $O(n^2\log n)$. After all updates are done, among all cliques that have been computed, we return the largest one as our solution, i.e., a maximum clique for $G(P)$. To prove the correctness, we will show that at some moment during the algorithm $M^*\subseteq S$ holds (and therefore a maximum clique of that $S$ is also a maximum clique of $G(S)$). 

In what follows, we first define $S$ in Section~\ref{sec:define}, and then explain the updates to $S$ in Section~\ref{sec:update}. In fact, the exact definition of $S$ will not be clear until Section~\ref{sec:update}, but Section~\ref{sec:define} at least provides motivations to define $S$. Finally, we summarize the entire algorithm in Section~\ref{sec:imple}.

\subsection{Defining $\boldsymbol{S}$}
\label{sec:define}

Since $p^*\in M^*$, we have $M^*\subseteq P(D_{p^*})$. Hence, it suffices to consider the points of $P$ in the unit disk $D_{p^*}$. For notational convenience, we assume that all points of $P$ are in $D_{p^*}$, i.e., $|pp^*|\leq 1$ for all points $p\in P$. Also, we rotate the plane so that $p^*$ is the leftmost point and no two points of $P$ have the same $x$- or $y$-coordinate. 

Let $\hat{p}$ denote the rightmost point of $P$. Let $\hat{q}$ denote the rightmost point of $M^*$. Without loss of generality, we assume that $\hat{q}$ is on the upper hull of the convex hull of $P$ (otherwise, the following discussion is symmetric; as we do not know whether $\hat{q}$ is on the upper or lower hull of $P$, we actually run two versions of the algorithm and then return the better solution, i.e., the larger clique found by the two versions). 

Let $P_u$ denote the set of points of $P$ on the upper hull of $P$, and let $P_l=P\setminus P_u$. We assume that both $p^*$ and $\hat{p}$ are in $P_u$ but not in $P_l$. 

We order the points of $P_u$ from left to right as $p^*=p_1,p_2,\ldots, p_t=\hat{p}$. For each point $p_i\in P_u$, let $S_u(i)$ denote a set of points $p\in P_u$ such that $x(p)\leq x(p_i)$ and $|pp_i|\leq 1$; similarly, define $S_l(i)$ as the set of points of $p\in P_l$ such that $x(p)\leq x(p_i)$ and $|pp_i|\leq 1$. Let $S(i)=S_u(i)\cup S_l(i)$. 

Suppose $\hat{q}=p_i$ for some $i$; then since every point $p$ of $M^*$ satisfies $x(p)\leq x(p_i)$ and $|pp_i|\leq 1$, it holds that $M^*\subseteq S(i)$. Our algorithm is based on this observation. 

Our algorithm will scan the points of $P_u$ from left to right. For each point $p_i\in P_u$, we will maintain a subset $S=S'(i)$ of points of $P$ whose $x$-coordinates are less than or equal to $x(p_i)$ such that $S(i)\subseteq S'(i)$. We will compute a maximum clique for $G(S'(i))$ for each $i$, and finally return the largest clique found throughout the algorithm. This guarantees the correctness of the algorithm as $M^*\subseteq S(i)\subseteq S'(i)$. To compute maximum cliques of $G(S'(i))$ for all $1\leq i\leq t$, we will show that each $G(S'(i))$ is cobipartite. Furthermore, if $S=S'(i)$, we will show that through a sequence of updates (point insertions/deletions) to $S$, we can obtain $S=S'(i+1)$ and after each update $G(S)$ is always cobipartite. In this way, we can use the EE data structure to compute the maximum cliques of $G(S'(i))$ for all $1\leq i\leq t$. We will show that the total number of updates to $S$ in the whole algorithm is $O(n)$. Hence, the total time of the algorithm is $O(n^2\log n)$ since each update can be handled in $O(n\log n)$ time by the EE data structure. 

To show that $S'(i)$ is cobipartite, it suffices to show that $S'(i)$ can be partitioned into two subsets $S_1$ and $S_2$ that are both cliques. To this end, we start with the following observation, which essentially shows that $G(S(i))$ is cobipartite. 

\begin{observation}
\label{obs:up_clique}
Both $S_u(i)$ and $S_l(i)$ are cliques. 
\end{observation}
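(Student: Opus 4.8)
The plan is to show that any two points in $S_u(i)$ are within distance $1$ of each other, and likewise for $S_l(i)$; this immediately gives that each set is a clique. I will focus on $S_u(i)$, since $S_l(i)$ is symmetric (points on the lower hull).

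First recall the setup: all points of $P$ lie in $D_{p^*}$, $p^*=p_1$ is the leftmost point, $\hat p=p_t$ is the rightmost point, and both lie on the upper hull $P_u$; the points of $P_u$ are ordered left to right as $p_1,\dots,p_t$. By definition, $S_u(i)$ consists of the points $p\in P_u$ with $x(p)\le x(p_i)$ and $|pp_i|\le 1$. So let $p,q\in S_u(i)$ with, say, $x(p)\le x(q)\le x(p_i)$; I want to conclude $|pq|\le 1$. Both $p$ and $q$ lie on the upper hull and their $x$-coordinates lie in the interval $[x(p),x(p_i)]$, with $p$ at the left endpoint of that range among $\{p,q,p_i\}$ and $p_i$ at the right endpoint.

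The key geometric fact I would invoke is a monotonicity/convexity property of distances along a convex chain: if $a$, $b$, $c$ appear in this left-to-right order on the upper hull, then $|ac|\ge \max(|ab|,|bc|)$ — intuitively, as you slide a point outward along one side of a convex chain while the other endpoint is fixed, the distance grows, because the chain "bulges" away. Concretely, since $p_i$ is to the right of both $p$ and $q$ on the upper hull, the point $q$ lies inside (or on the boundary of) the convex hull of $\{p, p_i\}$ together with the hull chain, and in particular $q$ lies below the ray structure so that the segment $\overline{p q}$ is "dominated" by $\overline{p p_i}$. The cleanest way to make this rigorous is: consider the vertical line $\ell_{\perp}(q)$; it separates $p$ (weakly to the left) from $p_i$ (strictly to the right since $x(q)\le x(p_i)$, and equality is excluded by the general position / distinct-$x$-coordinate assumption unless $q=p_i$, in which case $|pq|=|pp_i|\le 1$ trivially). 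Because $p,q,p_i$ are consecutive-order vertices of a convex (upper hull) chain, $q$ lies above the segment $\overline{p p_i}$; combined with $x(p)\le x(q)\le x(p_i)$, this forces $q$ to lie in the lens-like region bounded above by the arcs and below by $\overline{pp_i}$, and a short argument shows $|pq|\le |pp_i|$ and $|qp_i|\le |pp_i|$. Since $p, p_i\in S_u(i)$ gives $|pp_i|\le 1$ (as $p\in S_u(i)$ means $|pp_i|\le 1$), we get $|pq|\le 1$. The same argument with the roles of upper/lower hull reversed handles $S_l(i)$.

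The step I expect to be the main obstacle is pinning down the precise statement "$|pq|\le |pp_i|$ when $p,q,p_i$ occur in this order on the upper hull and $x(p)\le x(q)\le x(p_i)$," and making sure it does not secretly require $q$ to also satisfy some constraint I have not listed. I would prove it via the following clean lemma: if $b$ lies in the closed halfplane above the line through $a$ and $c$, and $x(a)\le x(b)\le x(c)$, then $|ab|\le |ac|$ and $|bc|\le |ac|$ — this is elementary (project onto the line $ac$ and use that the foot of the perpendicular from $b$ lies between those of $a$ and $c$, then compare hypotenuses), and the convexity of the upper hull guarantees the "above the line through $a$ and $c$" hypothesis for any three hull vertices in $x$-order. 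I should double-check the degenerate case $q=p_i$ separately (handled trivially above) and note that distinct $x$-coordinates rule out other ties, so the general position assumption is exactly what is needed.
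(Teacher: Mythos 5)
There is a genuine gap: the key lemma your whole argument rests on is false. You claim that if $b$ lies in the closed halfplane above the line through $a$ and $c$ and $x(a)\le x(b)\le x(c)$, then $|ab|\le |ac|$. Take $a=(0,0)$, $c=(1,0)$, $b=(1/2,10)$: all hypotheses hold, yet $|ab|\approx 10$ while $|ac|=1$. The proof sketch you give ("project onto the line $ac$ and compare hypotenuses") establishes the opposite inequality $|ab|\ge |af|$ (where $f$ is the foot of the perpendicular); to conclude $|ab|\le |ac|$ you would need the angle of the triangle \emph{at $b$} to be at least $90^\circ$, and the halfplane-plus-$x$-order condition does not give you that. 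The failure is not just in the abstract lemma: the contextual claim $|pq|\le |pp_i|$ for $p,q,p_i$ in $x$-order on the upper hull with $|pp_i|,|qp_i|\le 1$ is also false. For example, with $p^*=p_1=(-0.01,-0.5)$, $p=(0,0)$, $q=(0.05,0.45)$, $p_i=(0.1,-0.1)$, the chain $p_1,p,q,p_i$ is a valid upper hull inside $D_{p^*}$, both $p$ and $q$ belong to $S_u(i)$, yet $|pq|\approx 0.45 > |pp_i|\approx 0.14$.

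The underlying problem is that your argument only ever uses the right anchor $p_i$ (via $|pp_i|\le 1$) and never uses the left anchor $p^*=p_1$, i.e., the fact that every point under consideration also lies in $D_{p^*}$. That second constraint is essential: without it, two points of a concave chain that are both within distance $1$ of $p_i$ can be at distance up to about $\sqrt{2}$ from each other. The paper's proof uses both anchors at once: since $p$ and $p'$ lie between the parallel vertical lines $\ell_{\perp}(p_1)$ and $\ell_{\perp}(p_i)$, the convex quadrilateral $p_1pp'p_i$ has angle sum at most $180^\circ$ at $p_1$ and $p_i$, hence at least $180^\circ$ at $p$ and $p'$ combined, so one of $\angle p_1pp'$ and $\angle pp'p_i$ is obtuse; depending on which, one bounds $|pp'|$ by $|p_1p'|\le 1$ or by $|pp_i|\le 1$. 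You do not get to choose in advance which of the two anchors works, so any proof that commits to comparing against $|pp_i|$ alone cannot be repaired without adding the $p_1$ case.
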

\begin{proof}
We only prove the case for $S_u(i)$ since the other case is symmetric. Consider any two points $p,p'\in S_u(i)$. It suffices to prove that $|pp'|\leq 1$.

By definition, both $p$ and $p'$ are to the left of $p_i$ and within distance 1 from both $p_1$ and $p_i$  (see Figure~\ref{fig:Cu_Cl_cliques}). Hence, $|pp'|\leq 1$ obviously holds if $\{p,p'\}\cap \{p^*,p_i\}\neq \emptyset$. In the following, we assume that $\{p,p'\}\cap \{p^*,p_i\}= \emptyset$. Without loss of generality, we assume that $x(p)\leq  x(p')$. 

\begin{figure}
\begin{minipage}[t]{\textwidth}
\begin{center}
\includegraphics[height=1.6in]{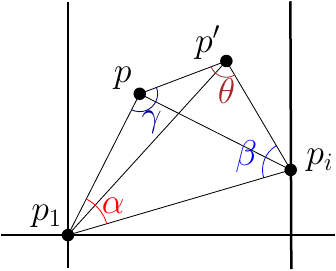}
\caption{Illustrating the proof of Observation~\ref{obs:up_clique}.}
\label{fig:Cu_Cl_cliques}
\end{center}
\end{minipage}
\end{figure}

Consider the two vertical lines $\ell_{\perp}(p_1)$ and $\ell_{\perp}(p_i)$. Since they are parallel, the sum of the interior angles at $p_1$ and $p_i$ in the quadrilateral $\Box p_1pp'p_i$ satisfies $\alpha + \beta \leq 180^\circ$, where $\alpha=\angle pp_1p_i$ and $\beta=\angle p'p_ip_1$. Therefore, the remaining two angles of the quadrilateral, denoted $\gamma=\angle p_1pp'$ and $\theta=pp'p_i$, must satisfy $\gamma + \theta \geq 180^\circ$. It follows that at least one of these angles, e.g., $\gamma$, is larger than or equal to $90^\circ$. Then in triangle $\triangle p_1pp'$, the side opposite this angle, namely $\overline{p_1p'}$, is no shorter than the side $\overline{pp'}$. Since $|p_1p'| \leq 1$, we conclude that $|pp'| \leq 1$.
\end{proof}

\paragraph{Remark.} To define $S'(i)$ so that $G(S'(i))$ is cobipartite, due to Observation~\ref{obs:up_clique}, it is tempted to have $S'(i)=S(i)$. However, as will be demonstrated later, having $S'(i)=S(i)$ may involve $\Omega(n^2)$ updates to $S'(i)$ for all $1\leq i\leq t$. Fortunately, we discover a way to make a larger $S'(i)$ so that $G(S'(i))$ is still cobipartite and $O(n)$ updates to $S'(i)$ are guaranteed. 
\bigskip



We define $S'(i)=S_u(i)\cup S'_l(i)$ for some subset $S'_l(i)$ such that $S_l(i)\subseteq S_l'(i)\subseteq P_l$. We will give the exact definition of $S'_l(i)$ later. By Observation~\ref{obs:up_clique}, $S_u(i)$ is a clique in $G(P)$, called an {\em upper clique}. As will be seen later, $S'_l(i)$ is also a clique, called a {\em lower clique}. 

\subsection{Updating $\boldsymbol{S=S'(i)}$} 
\label{sec:update}

We now explain how to update $S=S'(i)$ incrementally from $i=1$ to $i=t$. In the meanwhile, we will also give the exact definition of $S'_l(i)$.  

We say that a point $p\in P$ is {\em inserted} into $S_u(i)$ if $p\in S_u(i)$ but $p\not\in S_u(i-1)$, and $p$ is {\em deleted} from $S_u(i)$ if $p\in S_u(i-1)$ but $p\not\in S_u(i)$. Similarly, we use the same terminology for $S_l(i)$ (and also for $S'_l(i)$, $S(i)$ and $S'(i)$). 

Let $h$ be the index in $[1,t]$ such that $p_h$ is the point of $P_u$ with the largest $y$-coordinate.

\subsubsection{Maintaining the upper clique}
The following lemma implies that the number of updates to the upper clique $S_u(i)$ from $i=1$ to $i=t$ is $O(n)$. 

\begin{lemma}
\label{lem:upClique}
Any point $p\in P$ can be inserted into or deleted from $S_u(i)$ at most once for all $1\leq i\leq t$. In particular:
\begin{enumerate}
    \item If $p\in S_u(i)$ for some $i\in [1,h]$, then $p\in S_u(j)$ holds for all $j\in (i,h]$. 
    \item If $p$ is deleted from $S_u(i)$ for some $i\in [h,t]$, then $p\notin S_u(j)$ holds for all $j\in (i,t]$. 
\end{enumerate}
\end{lemma}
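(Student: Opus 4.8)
The plan is to prove both numbered claims geometrically, and then observe that together they immediately give the "at most once" statement. Recall $S_u(i)$ consists of the points $p\in P_u$ with $x(p)\le x(p_i)$ and $|pp_i|\le 1$, and that $p_h$ is the highest point of $P_u$, so walking along the upper hull from $p_1=p^*$ to $p_t=\hat p$ the $y$-coordinates first increase (up to $p_h$) and then decrease. The key elementary fact I would isolate first is a monotonicity lemma about distances to points moving along a convex chain: if $p_a, p_b, p_c$ appear in this left-to-right order on the upper hull and $p$ is a point with $x(p)\le x(p_a)$, then $|pp_b|$ is "unimodal" along the chain in a suitable sense — more precisely, for the purposes here I only need: (i) if $p$ lies weakly below the line through $p_a$ and $p_b$ (which, for points on the upper hull to the left, is the typical configuration) then as we move from $p_a$ rightward toward $p_h$ the distance from $p$ is nondecreasing; and (ii) once we pass $p_h$ the geometry degrades. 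Rather than a single slick inequality, I expect the cleanest route is to argue directly with the two hull pieces $[p_1,p_h]$ (increasing $y$) and $[p_h,p_t]$ (decreasing $y$) separately.

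For Claim 1, suppose $p\in S_u(i)$ with $i\le h$; I must show $p\in S_u(j)$ for every $j\in(i,h]$. Membership has two conditions. The $x$-coordinate condition is trivial: $x(p)\le x(p_i)\le x(p_j)$ since the $p_i$ are sorted left to right. So the content is the distance condition: $|pp_i|\le 1 \Rightarrow |pp_j|\le 1$ for $i<j\le h$. Here I would use convexity: $p\in P_u$ with $x(p)\le x(p_i)$, and $p_i, p_j$ both lie on the upper hull with $p_j$ between $p_i$ and the apex $p_h$. I claim $p$ lies below (or on) the supporting line of the hull edge/chord at $p_i$ in the direction of increasing $x$, so the angle $\angle p\,p_i\,p_j$ is at least $90^\circ$ — then in triangle $\triangle p\,p_i\,p_j$ the side $\overline{pp_i}$ is opposite the angle at $p_j$, which is acute, hence $|pp_j|\le|pp_i|\le 1$. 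Wait — I need to be careful about which side is opposite which angle; the correct statement is that $\overline{pp_j}$ (opposite the angle at $p_i$) is the longest side only if the angle at $p_i$ is the largest, so I actually want $\angle p\,p_i\,p_j \ge 90^\circ$, giving $|pp_j|^2 \le |pp_i|^2 + |p_ip_j|^2$, which is not quite enough. So the real argument must instead compare $|pp_j|$ with $|pp_i|$ using that $p_j$ is "turning away" from $p$: I would set this up exactly as in the proof of Observation~\ref{obs:up_clique}, forming the quadrilateral $\Box p\,p_i\,p_j\,r$ for a suitable auxiliary point (e.g. using the vertical line $\ell_{\perp}(p)$ or the apex $p_h$) and reading off that one of the relevant angles is obtuse. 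Claim 2 is the mirror image on the descending part of the hull: if $p$ is deleted from $S_u(i)$ at some $i\ge h$, then either the $x$-condition failed (impossible, since $x$ only grows, so $p$ is deleted because $|pp_i|>1$) — and then a symmetric convexity/angle argument shows $|pp_j|>1$ for all $j>i$, so $p$ never returns.

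Finally, the "at most once" conclusion: by the sorted $x$-coordinates, once $p$'s $x$-condition holds it holds forever, so insertions/deletions are governed purely by the distance condition; Claim 1 says that on $[1,h]$ the set of indices where $p\in S_u(\cdot)$ is an up-set ending at $h$ (a single contiguous block), and Claim 2 says that on $[h,t]$ once $p$ leaves it never returns, so the membership pattern of $p$ across $1,\dots,t$ is a single contiguous interval of the form $[i_0, j_0]$ with $i_0\le h$ if nonempty — hence at most one insertion (entering at $i_0$) and at most one deletion (leaving at $j_0+1$). I would also note the degenerate boundary cases ($p$ never in $S_u$, $p$ in $S_u$ for all $i$, $p$ first appearing after $h$) are all consistent with this. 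The main obstacle I anticipate is making the angle/convexity argument in Claims 1 and 2 fully rigorous without case analysis on where $p$ sits relative to the chord $\overline{p_ip_j}$ and the apex $p_h$; I expect the quadrilateral trick from Observation~\ref{obs:up_clique}, applied with $\ell_{\perp}(p_i)$ and $\ell_{\perp}(p_j)$ as the two parallel lines, to be the right device, with the convex-position hypothesis ensuring the quadrilateral is simple and the "$p$ to the left, on the upper hull" hypothesis ensuring the correct angle is the obtuse one.
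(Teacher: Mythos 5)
There is a genuine gap in your argument for Claim 1. You correctly notice that forcing $\angle p\,p_i\,p_j\ge 90^\circ$ yields $|pp_j|\ge |pp_i|$, the wrong direction, but your proposed fix --- ``the quadrilateral trick from Observation~\ref{obs:up_clique} with a suitable auxiliary point'' --- is left unspecified and is not really the right device: that trick shows that two points each within distance $1$ of \emph{both} $p_1$ and $p_i$ are close to each other, which is not the configuration here. The idea you are missing is to anchor the triangle at $p_1=p^*$ rather than at $p_i$: since every point of $P$ lies in $D_{p^*}$, we get $|p_1p_j|\le 1$ for free, and because $p$ lies on the ascending part of the upper hull, $p_1$ is to the left of and below $p$ while $p_j$ (with $j\le h$) is to the right of and above $p$, so $\angle p_1\,p\,p_j\ge 90^\circ$ and hence $|pp_j|\le |p_1p_j|\le 1$. (Note this does not even use the hypothesis $|pp_i|\le 1$; also, your ``distance is nondecreasing'' framing cannot work for Claim 1, since a nondecreasing distance does not preserve the bound $\le 1$.)

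Your Claim 2 is closer to the paper's argument (obtuse angle at $p_i$ in $\triangle p\,p_i\,p_j$ giving $|pp_j|\ge|pp_i|>1$), but as stated it is also incomplete: to establish $\angle p\,p_i\,p_j\ge 90^\circ$ on the descending part of the hull one first needs $y(p)>y(p_i)$, which is not automatic; the paper derives it by contradiction using the same anchored triangle $\triangle p_1\,p\,p_i$ (if $y(p)\le y(p_i)$ then $|pp_i|\le|p_1p_i|\le 1$, contradicting the deletion). Your final paragraph assembling the two claims into the ``at most once'' conclusion is fine.
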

\begin{proof}
To prove the first part, by definition, it suffices to show that $|pp_j|\leq 1$. 
As $p\in S_u(i)$ and $i\in [1,h]$, $p$ is on the upper hull of $P$ to the left of $p_i$. Since $j\in (i,h]$, we have 
$x(p) <x(p_j)$, $y(p) < y(p_j)$, and the three points $p_1$, $p$, and $p_j$ appear in clockwise order along the upper hull of $P$  (see Figure~\ref{fig:upper10}). 
Hence, the angle $\angle p_1 p p_j$ must satisfy $90^\circ \leq \angle p_1 p p_j < 180^\circ$, forming an obtuse triangle. 
In such a triangle, the longest side is $\overline{p_1p_j}$. Since $|p_1p_j| \leq 1$, it follows that $|pp_j|\leq |p_1p_j| \leq 1$.

\begin{figure}
\begin{minipage}[t]{0.48\textwidth}
\begin{center}
\includegraphics[height=1.1in]{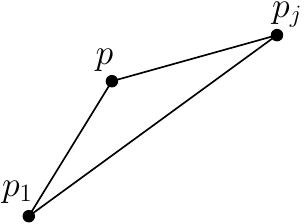}
\caption{Illustrating the proof of Lemma~\ref{lem:upClique}.}
\label{fig:upper10}
\end{center}
\end{minipage}
\hspace{0.02in}
\begin{minipage}[t]{0.48\textwidth}
\begin{center}
\includegraphics[height=1.1in]{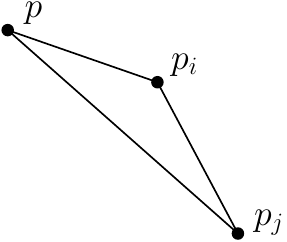}
\caption{Illustrating the proof of Lemma~\ref{lem:upClique}.}
\label{fig:upper20}
\end{center}
\end{minipage}
\end{figure}


For the second part of the lemma, since $p$ is deleted from $S_u(i)$, we have $x(p)<x(p_i)$ and $|pp_i| > 1$. To prove $p\not\in S_u(j)$, it suffices to show that $|pp_j|>1$. 

We first argue that $y(p) > y(p_i)$. Assume to the contrary that $y(p) \leq  y(p_i)$. Then, following a similar argument as above, since $p_1$, $p$, and $p_i$ appear in clockwise order along the upper hull of $P$, the angle $\angle p_1 p p_i$ must satisfy $90^\circ \leq \angle p_1 p p_i < 180^\circ$, forming an obtuse triangle (see Figure~\ref{fig:upper10} by changing $p_j$ to $p_i$). Hence, $|pp_i|\leq |p_1p_i| \leq 1$, contradicting $|pp_i| > 1$. Therefore, $y(p) > y(p_i)$ holds. 

Since $j\in (i,t]$ and $h\leq i$, we have $x(p_i)< x(p_j)$ and $y(p_i)> y(p_j)$ (see Figure~\ref{fig:upper20}). As $y(p) > y(p_i)> y(p_j)$, $x(p)<x(p_i)<x(p_j)$, and $p\rightarrow p_i \rightarrow p_j$ makes a right turn, the angle $\angle p p_i p_j$ must satisfy $90^\circ \leq \angle p p_i p_j < 180^\circ$, forming an obtuse triangle. Hence, $|pp_j|\geq |pp_i|$. As $|pp_i| > 1$, we obtain $|pp_j|>1$.

This completes the proof of the lemma.
\end{proof}


\subsubsection{Maintaining the lower clique}

To define the lower clique $S_l'(i)$, which is based on $S_l(i)$, and maintain it for updates, we first make the following observation, similar to Lemma~\ref{lem:upClique}(2) but for $i\in [1,h]$. The lemma implies that the number of updates to $S_l(i)$ for all $1\leq i\leq h$ is $O(n)$.

\begin{lemma}\label{lem:lowerfirsthalf}
Any point $p\in P$ can be deleted from $S_l(i)$ at most once for all $1\leq i\leq h$. In particular, if $p$ is deleted from $S_l(i)$ for some $i\in [1,h]$, then $p\notin S_l(j)$ holds for all $j\in (i,h]$. 
\end{lemma}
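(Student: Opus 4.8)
The plan is to mirror the structure of the proof of Lemma~\ref{lem:upClique}(2), but now working on the lower hull and for the "first half" of the scan, i.e., $i\in[1,h]$. Recall $S_l(i)$ consists of those points $p\in P_l$ with $x(p)\le x(p_i)$ and $|pp_i|\le 1$. A point $p$ is deleted from $S_l(i)$ exactly when $p\in S_l(i-1)$ but $p\notin S_l(i)$; since moving from $p_{i-1}$ to $p_i$ only increases the $x$-threshold, the only way to lose $p$ is that $|pp_i|>1$ while $x(p)\le x(p_{i-1})<x(p_i)$. So to prove the lemma it suffices to show: if $x(p)<x(p_i)$ and $|pp_i|>1$ for some $i\in[1,h]$, then $|pp_j|>1$ for every $j\in(i,h]$.

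First I would pin down the relevant geometric configuration. Since $i\in[1,h]$, the points $p_1,\dots,p_h$ are exactly the left-to-right (equivalently, bottom-to-top in $y$) portion of the upper hull up to its topmost vertex $p_h$; along this portion, as $i$ increases both $x(p_i)$ and $y(p_i)$ increase. The deleted point $p$ lies in $P_l$, hence on or below the lower hull's chain, and in particular below all of $p_1,\dots,p_h$ — more precisely, $p$ is separated from the upper-hull vertices by the segment $\overline{p_1\hat p}$ (or one can argue directly that $y(p)$ is small relative to $y(p_i)$). I would argue, exactly as in Lemma~\ref{lem:upClique}, that because $p$ sits to the left of $p_i$ and strictly below it (so the turn $p\to p_i\to p_j$ is a left turn as we go up the upper hull, with $x(p)<x(p_i)<x(p_j)$ and $y(p)<y(p_i)<y(p_j)$), the angle $\angle p\,p_i\,p_j$ satisfies $90^\circ \le \angle p\,p_i\,p_j < 180^\circ$. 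Then in the obtuse (or right) triangle $\triangle p\,p_i\,p_j$, the side $\overline{pp_j}$ opposite the largest angle is the longest, so $|pp_j|\ge|pp_i|>1$, giving $p\notin S_l(j)$ for all $j\in(i,h]$. The "at most once" statement is then immediate: once $p$ leaves $S_l$ at step $i$ it never returns for any later index up to $h$, so it can be deleted at most once in the range $[1,h]$.

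The step I expect to require the most care is establishing the angle bound $\angle p\,p_i\,p_j\ge 90^\circ$, i.e., that $p\to p_i\to p_j$ genuinely turns the "right way." In Lemma~\ref{lem:upClique}(2) this relied on $y(p)>y(p_i)$ together with the right-turn structure of the upper hull past its apex; here the roles are reversed — I need $y(p)<y(p_i)$ and the left-turn structure of the upper hull before its apex, plus the fact that $p$ is on the lower hull and therefore lies strictly below the upper-hull chain between $p_1$ and $p_h$. I would make this precise by the same separating-argument used throughout Section~\ref{sec:define}: $p\in P_l$ together with convex position forces $p$ to lie below the line through $p_i$ and $p_j$ extended along the upper hull, which is exactly what pushes $\angle p\,p_i\,p_j$ to be at least a right angle. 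A supporting figure analogous to Figures~\ref{fig:upper10}–\ref{fig:upper20} would make the configuration transparent. Once the angle bound is in hand, the rest is the standard "longest side opposite the largest angle" argument already used twice in the excerpt, so no new ideas are needed beyond correctly identifying the orientation.
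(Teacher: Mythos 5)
Your proof follows essentially the same route as the paper's: reduce to showing $|pp_j|>1$, establish that the angle $\angle p\,p_i\,p_j$ is at least $90^\circ$ using the coordinate orderings $x(p)<x(p_i)<x(p_j)$ and $y(p)<y(p_i)<y(p_j)$ forced by convexity on the ascending portion of the upper hull, and conclude $|pp_j|\ge |pp_i|>1$ by the longest-side-opposite-largest-angle argument. One small slip: the turn $p\to p_i\to p_j$ is a \emph{right} turn (as the paper states), not a left turn, but this mislabeling is harmless because your derivation of the obtuse angle actually rests on the coordinate inequalities, which are correct and by themselves sufficient.
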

\begin{proof}
Since $p$ is deleted from $S_l(i)$, we have $|pp_i| > 1$ and $x(p) < x(p_i)$. To prove $p\notin S_l(j)$, it suffices to show that $|pp_j|>1$. 

Since $j\in (i,h]$, we have $x(p_i)< x(p_j)$ and $y(p_i)< y(p_j)$. As $x(p)< x(p_i)<x(p_j)$ and $p\rightarrow p_i \rightarrow p_j$ makes a right turn, the angle $\angle pp_ip_j$ must satisfy $90^\circ\leq\angle pp_ip_j <180\degree$. Hence, $|pp_j|\geq |pp_i|$. Since $|pp_i|>1$, we obtain $|pp_j|>1$.
\end{proof}

The above lemma is for indices $i\in [1,h]$. When $i$ changes from $h$ to $t$, however, the situation becomes more complicated since a point of $P_l$ may be inserted into and deleted from $S_l(i)$ up to $\Omega(n)$ times, leading to a total of $\Omega(n^2)$ updates to $S_l(i)$ in the worst case. Consider the example illustrated in Figure~\ref{fig:ex_O(n)ops}. Let $p$ be a point in $P_l$. Suppose $P_u$ contains $O(n)$ points such that every other point from left to right lies at distance exactly $1 - \delta$ from $p$, while its immediate neighbors lie at distance $1 + \delta$, for some small $\delta > 0$. In this setting, when $i$ changes from $h$ to $t$, $p$ must alternate between being inserted into and deleted from $S_l(i)$. Furthermore, if there are $O(n)$ such points $p$ in $P_l$ (e.g., all lying within distance $\delta$ from $p$), then each of them will behave similarly. In this case, the total number of updates to $S_l(i)$ for $h\leq i\leq t$ may reach $\Omega(n^2)$. In the following, we consider a new strategy to define and maintain a larger set $S_l'(i)$ with $S_l(i)\subseteq S_l'(i)$.



\begin{figure}[t]
\begin{minipage}[t]{0.48\textwidth}
\begin{center}
\includegraphics[height=2.0in]{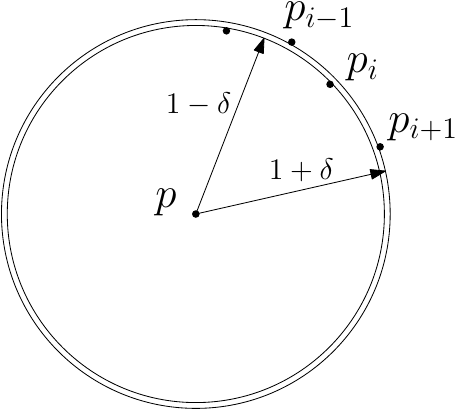}
\caption{Illustration of a point $p\in P_l$ that may undergo $\Omega(n)$ updates to $S_l(i)$.}
\label{fig:ex_O(n)ops}
\end{center}
\end{minipage}
\hspace{0.15in}
\begin{minipage}[t]{0.48\textwidth}
\begin{center}
\includegraphics[height=2.0in]{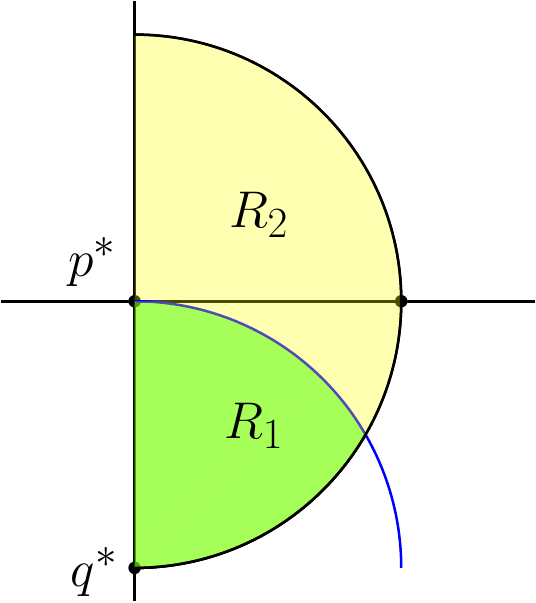}
\caption{Illustrating the two regions $R_1$ and $R_2$ in the right half of $D_{p^*}$.}
\label{fig:CP_clique_search}
\end{center}
\end{minipage}
\end{figure}

Let $q^*$ be the point $(x(p^*),y(p^*)-1)$. We partition the right half of the unit disk $D_{p^*}$ into two regions $R_1$ and $R_2$ by the boundary of $D_{q^*}$ (see Figure~\ref{fig:CP_clique_search}). Specifically, define $R_1 =\{p\in D_{p^*}: p\in D_{q^*}, x(p)\geq x(p^*)\}$ and $R_2 =\{p\in D_{p^*}: p\notin D_{q^*}, x(p) \geq x(p^*)\}$.
We first have the following lemma for $R_2$.

\begin{lemma}
\label{obs:IIpoints}
For any point $p$ of $P_l$ in $R_2$, $|pp_i|\leq 1$ holds for any $p_i\in P_u$ with $x(p)<x(p_i)$. This implies that if $p$ is inserted into $S_l(i)$ for some $i\in [h,t]$, then $p$ will always be in $S_l(j)$ for any $j\in (i,t]$. 
\end{lemma}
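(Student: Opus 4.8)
The lemma has two parts, and the second (stability of membership in $S_l$) follows immediately from the first, so I would concentrate on the distance bound $|pp_i|\le 1$ and settle the stability claim in one line. For that stability claim: if $p\in P_l\cap R_2$ is inserted into $S_l(i)$ for some $i\in[h,t]$ then $p\in S_l(i)$, so $x(p)\le x(p_i)$, and since $P_l$ and $P_u$ are disjoint we have $p\ne p_i$ and hence $x(p)<x(p_i)$; then for every $j\in(i,t]$ we have $x(p_j)>x(p_i)>x(p)$, so the first part of the lemma gives $|pp_j|\le 1$, and together with $x(p)<x(p_j)$ this says exactly that $p\in S_l(j)$. For the distance bound, fix $p\in P_l\cap R_2$ and $p_i\in P_u$ with $x(p_i)>x(p)$ (so $p_i\in P\subseteq D_{p^*}$), and split on the position of $p$ relative to $p^*$ in the $y$-direction: since no two points of $P$ share a $y$-coordinate, exactly one of $y(p)>y(p^*)$ and $y(p)<y(p^*)$ holds, and the two cases use the two hypotheses of the lemma separately (the first uses convex position, the second uses $p\in R_2$).

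\textbf{Case $y(p)>y(p^*)$.} Since $p^*$ is the leftmost point of $P$ and $P$ is in convex position, $p^*$ is the left endpoint of the lower hull of $P$ and $p$ is an interior vertex of it. Viewing the lower hull as a convex piecewise-linear function $f$ of the abscissa on $[x(p^*),x(\hat p)]$, we have $f(x(p^*))=y(p^*)$ and $f(x(p))=y(p)>y(p^*)$, so the chord of $f$ over $[x(p^*),x(p)]$ has positive slope; by convexity $f$ is then strictly increasing on $[x(p),x(\hat p)]$. As $x(p)<x(p_i)\le x(\hat p)$, the lower hull at abscissa $x(p_i)$ lies above height $y(p)$, and since $p_i\in P$ lies on or above the lower hull of $P$ we get $y(p_i)>y(p)$. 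Hence $0<x(p_i)-x(p)<x(p_i)-x(p^*)$ and $0<y(p_i)-y(p)<y(p_i)-y(p^*)$, so squaring and adding gives $|pp_i|^2<|p^*p_i|^2\le 1$, the last inequality because $p_i\in D_{p^*}$.

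\textbf{Case $y(p)<y(p^*)$.} Here I would prove the stronger statement that every $z\in D_{p^*}$ with $x(z)\ge x(p)$ satisfies $|pz|\le 1$; since $p_i\in D_{p^*}$ and $x(p_i)>x(p)$, this gives the bound. The slice $D_{p^*}\cap\{x\ge x(p)\}$ is convex and compact, so the distance to $p$, being a convex function, is maximized over it at an extreme point, which here means on the bounding circular arc $\partial D_{p^*}\cap\{x\ge x(p)\}$; writing $|z-p|^2=1+|p^*p|^2-2\langle z-p^*,\,p-p^*\rangle$ for $z$ on that arc and using that $|p^*p|\le 1$ forces the direction $p-p^*$ to fall inside the angular range spanned by the arc, a short trigonometric computation shows the maximum is attained at an endpoint of the arc, i.e.\ at an endpoint of the vertical chord $\{x=x(p)\}\cap D_{p^*}$. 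Since $y(p)<y(p^*)$, the farther endpoint is the topmost one, at distance $(y(p^*)-y(p))+\sqrt{1-(x(p)-x(p^*))^2}$ from $p$. Finally, expanding $|pq^*|^2$ with $q^*=(x(p^*),y(p^*)-1)$ shows that $p\notin D_{q^*}$, i.e.\ $p\in R_2$, is equivalent to $|p^*p|^2>2\,(y(p^*)-y(p))$, and one more squaring shows this is in turn equivalent to $(y(p^*)-y(p))+\sqrt{1-(x(p)-x(p^*))^2}<1$; hence the maximum distance from $p$ over the slice is below $1$, which finishes the case.

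The step I expect to be the main obstacle is the second case, where two things need care. First, verifying that the maximum of the distance-to-$p$ function over the slice $D_{p^*}\cap\{x\ge x(p)\}$ is attained exactly at the top endpoint of the vertical chord through $p$, and not somewhere in the interior of the bounding arc; this rests on $|p^*p|\le 1$, which places the direction from $p^*$ to $p$ inside the angular span of that arc, so that the relevant trigonometric expression has a single interior extremum and it is a maximum. Second, getting the chain of equivalences ``$p\notin D_{q^*}$'' $\Leftrightarrow$ ``$|p^*p|^2>2(y(p^*)-y(p))$'' $\Leftrightarrow$ ``(distance to the top chord endpoint) $<1$'' exactly right, including the directions of the inequalities and the legitimacy of the squaring step (an equivalence because $|p^*p|\le 1$ forces $1-(y(p^*)-y(p))\ge 0$). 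By comparison the first case should follow from nothing more than elementary convexity of the lower hull and a termwise comparison of coordinate differences.
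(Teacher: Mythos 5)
Your proof is correct, but for the harder case it takes a genuinely different route from the paper's. Both arguments make the same case split on the sign of $y(p)-y(p^*)$ and, in the first case, both reduce to $|pp_i|\le|p^*p_i|\le 1$: the paper gets there by observing that $p^*\rightarrow p\rightarrow p_i$ turns left so that $\angle p^*pp_i\ge 90^\circ$ (an obtuse-triangle argument), while you first derive $y(p_i)>y(p)$ from the convexity of the lower hull and then compare the coordinate differences termwise --- the same inequality reached by slightly different bookkeeping. The real divergence is in the case $y(p)<y(p^*)$. The paper argues via the perpendicular bisector $\ell_1$ of $p^*$ and $p$: using the point where $\vl(p)$ meets the upper boundary of $D_{q^*}$, it shows that the upper intersection $q_1$ of $\ell_1$ with $\partial D_{p^*}$ satisfies $x(q_1)\le x(p)$, so every point of $D_{p^*}$ to the right of $\vl(p)$ lies on $p$'s side of the bisector and is therefore closer to $p$ than to $p^*$, giving $|pp_i|\le|p^*p_i|\le 1$ again. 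You instead maximize $|pz|$ directly over the convex slice $D_{p^*}\cap\{x\ge x(p)\}$, locate the maximum at the top endpoint of the vertical chord through $p$, and check that this maximum, $(y(p^*)-y(p))+\sqrt{1-(x(p)-x(p^*))^2}$, is less than $1$ precisely when $p\notin D_{q^*}$; your squaring steps are legitimate for the reason you give ($|p^*p|\le 1$ keeps $1-(y(p^*)-y(p))$ nonnegative). Both arguments are sound. The paper's bisector argument is more synthetic and avoids trigonometry, whereas your computation is sharper in one respect: it shows that $p\in R_2$ is \emph{exactly} the condition under which the entire right slice of $D_{p^*}$ beyond $\vl(p)$ lies within unit distance of $p$, which explains why $q^*$ and $D_{q^*}$ are the natural choice in the definition of $R_1$ and $R_2$.
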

\begin{proof}
It suffices to prove that $|pp_i|\leq 1$. This obviously holds if $p=p^*$. Hence, we assume $p\neq p^*$ below, and thus $x(p^*)<x(p)$. Depending on whether $y(p)\geq y(p^*)$, there are two cases. 

We first prove the case $y(p)\geq y(p^*)$ since it is easier. In this case, since $p\in S_l$ while $p_i\in S_u$, $p^*\rightarrow p\rightarrow p_i$ makes a left turn. Since $x(p)<x(p_i)$, we have $\angle p^*pp_i\geq 90^\circ$ (see Figure~\ref{fig:lower10}). Therefore, $|pp_i|\leq |p^*p_i|$. As $|p^*p_i|\leq 1$, we obtain $|pp_i|\leq 1$.

In the following, we prove the case $y(p)< y(p^*)$; refer to Figure~\ref{fig:Region_III}. Our goal is to prove $|pp_i|\leq 1$. 

Let $\ell_1$ denote the bisector of $p^*$ and $p$. Since $y(p)< y(p^*)$ and $x(p)>x(p^*)$, $p$ is in the fourth quadrant of $p^*$ and the slope of $\ell_1$ is positive (see Figure~\ref{fig:Region_III}). 
We claim that $x(q_1)\leq x(p)$. Before proving the claim, we use the claim to prove $|pp_i|\leq 1$. Indeed, since $x(q_1)\leq x(p)<x(p_i)$, $p_i\in D_{p^*}$, and the slope of $\ell_1$ is positive, $p_i$ must lie to the right of $\ell_1$. This means that $|pp_i|\leq |p^*p_i|$. As $|p^*p_i|\leq 1$, we obtain $|pp_i|\leq 1$. 
In the following, we prove the claim $x(q_1)\leq x(p)$.

\begin{figure}
\begin{minipage}[t]{0.48\textwidth}
\begin{center}
\includegraphics[height=1.1in]{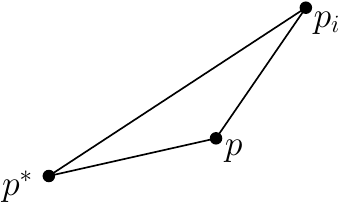}
\caption{Illustrating the case $y(p)\geq y(p^*)$ for the proof of Lemma~\ref{obs:IIpoints}.}
\label{fig:lower10}
\end{center}
\end{minipage}
\hspace{0.05in}
\begin{minipage}[t]{0.48\textwidth}
\begin{center}
\includegraphics[height=2.3in]
{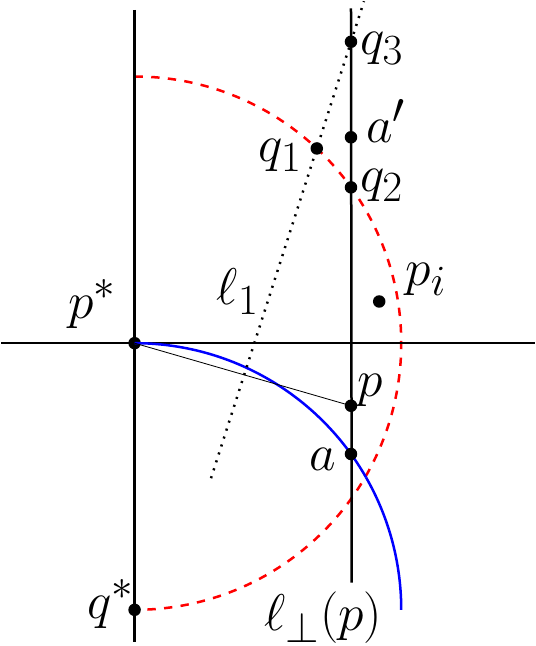}
\caption{Illustrating the case $y(p)< y(p^*)$ for the proof of Lemma~\ref{obs:IIpoints}: The red dashed arc is the right half boundary of $D_{p^*}$ and the blue solid arc is on the boundary of $D_{q^*}$.}
\label{fig:Region_III}
\end{center}
\end{minipage}
\end{figure}

The vertical line $\vl(p)$ intersects $\partial D_{p^*}$ at two points; we let $q_2$ denote the upper intersection point, which must be in the first quadrant of $p^*$. To prove $x(q_1)\leq x(p)$, since $x(p)=x(q_2)$, it suffices to prove $x(q_1)\leq x(q_2)$. Let $q_3$ be the intersection of $\ell_1$ and $\vl(p)$. To prove $x(q_1)\leq x(q_2)$, 
since slope of $\ell_1$ is positive, it suffices to prove that $y(q_3)\geq y(q_2)$. 

Since $p\in R_2$, the vertical line $\vl(p)$ intersects the upper boundary of the unit disk $D_{q^*}$ at a point $a$ with $y(a)< y(p)$. Since $q_2$ is at the upper boundary of $D_{p^*}$, $a$ is at the upper boundary of $D_{q^*}$, $p^*$ and $q^*$ are at the same vertical line with $y(p^*)=y(q^*)+1$, and $q_2$ and $a$ are at the same vertical line, we obtain that $y(q_2)=y(a)+1$. 

Let $a'$ be a point on $\vl(p)$ with $y(a')=y(p)+1$. Since $y(p) > y(a)$, we have $y(a')=y(p)+1> y(a)+1=y(q_2)$. This implies that $a'$ is outside $D_{p^*}$ and therefore $|p^*a'|>1=|pa'|$. Hence, $a'$ is strictly to the right of the bisector line $\ell_1$. Since the slope of $\ell_1$ is positive and $q_3=\ell_1\cap \vl(p)$, all points of $\vl(p)$ strictly to the right of $\ell_1$ are below $q_3$. Therefore, we obtain $y(q_3)\geq y(a')$. As $y(a')>y(q_2)$, we obtain $y(q_3)>y(q_2)$. 

This completes the proof of the lemma.
\end{proof}

\paragraph{Defining $\boldsymbol{S'_l(i)}$.}
In light of Lemma~\ref{obs:IIpoints}, when $i$ changes from $h$ to $t$, instead of maintaining $S_l(i)$, we maintain a superset $S'_l(i)$ of $S_l(i)$, defined as follows. For $1\leq i\leq h$, we simply define $S'_l(i)=S_l(i)$. In general, assuming that $S'_l(i)$ has been defined for some $i\in [h,t]$, we define $S'_l(i+1)$ as follows. First, we put all points of $S'_l(i)$ to $S'_l(i+1)$. Then, for each point $p\in S_l(i+1)\setminus S_l'(i)$, we add $p$ to $S'_l(i+1)$. Furthermore, if $p\in R_2$, then we check every point $p'$ from $S_l'(i+1)\cap R_1$, and delete $p'$ from $S'_l(i+1)$ if $|pp'|>1$ (note that if $|pp'|>1$, then $p'$ cannot be in $S_l(i+1)$ since $S_l(i+1)$ is a clique by Observation~\ref{obs:up_clique}; this means that no point $p\in S_l(i+1)\setminus S_l'(i)$ added to $S_l'(i+1)$ above will be later deleted; this further implies that the order we add points of $p\in S_l(i+1)\setminus S_l'(i)$ to $S_l'(i+1)$ does not matter). 

The following lemma shows that $S'_l(i)$ is a ``viable" alternative to $S_l(i)$. 

\begin{lemma}\label{lem:lowerclique}
For any $i\in [1,t]$, $S_l(i)\subseteq S_l'(i)$ and $S_l'(i)$ is a clique. 
\end{lemma}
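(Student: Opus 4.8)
The plan is to prove the lemma by induction on $i$, carrying along a strengthened hypothesis: in addition to $S_l(i)\subseteq S_l'(i)$ and ``$S_l'(i)$ is a clique'', I would also maintain the identity $S_l'(i)\cap R_2 = S_l(i)\cap R_2$. The base case is the whole range $i\in[1,h]$, where $S_l'(i)=S_l(i)$ by definition; this is a clique by Observation~\ref{obs:up_clique}, and the $R_2$-identity is trivial.

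Before running the induction I would isolate two auxiliary facts. (i) A geometric fact: $\mathrm{diam}(R_1)\le 1$, so \emph{any} subset of $P(R_1)$ is a clique. After normalizing $p^*=(0,0)$ (hence $q^*=(0,-1)$), one checks that $R_1=D_{p^*}\cap D_{q^*}\cap\{x\ge 0\}$ is contained in the box $[0,\tfrac{\sqrt3}{2}]\times[-1,0]$, and that every point of that box lies within distance $1$ of $c:=(\tfrac{\sqrt3}{2},-\tfrac12)$; hence $R_1\subseteq D_{p^*}\cap D_{q^*}\cap D_c$, which (since $p^*,q^*,c$ are pairwise at distance $1$) is a Reuleaux triangle of width $1$ and therefore has diameter $1$. (ii) A monotonicity fact pulled out of Lemma~\ref{obs:IIpoints}: for $p\in P_l\cap R_2$ the condition $|pp_i|\le 1$ is automatic as soon as $x(p)<x(p_i)$, so $p\in S_l(i)$ iff $x(p)<x(p_i)$; since $x(p_1)<\cdots<x(p_t)$, the sets $S_l(i)\cap R_2$ are nondecreasing in $i$, and in particular no point of $R_2$ is ever deleted from $S_l$ --- hence none from $S_l'$ --- once it has entered.

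For the inductive step from $i$ to $i+1$ (with $i\ge h$), I would prove the three parts in order. (a) $S_l(i+1)\subseteq S_l'(i+1)$: every $q\in S_l(i+1)$ is placed into $S_l'(i+1)$ at the start (inherited from $S_l'(i)$, or else lying in $S_l(i+1)\setminus S_l'(i)$ and inserted explicitly) and is never removed, because a removal of $q$ would need some $p\in S_l(i+1)$ with $|pq|>1$, which is impossible as $S_l(i+1)$ is a clique by Observation~\ref{obs:up_clique}. (b) $S_l'(i+1)\cap R_2 = S_l(i+1)\cap R_2$: no $R_2$-point is ever deleted, so the $R_2$-part of $S_l'(i+1)$ equals $(S_l'(i)\cap R_2)\cup\bigl((S_l(i+1)\setminus S_l'(i))\cap R_2\bigr)$, which by the induction hypothesis and fact (ii) is exactly $S_l(i+1)\cap R_2$. (c) $S_l'(i+1)$ is a clique: a pair both in $R_1$ is covered by fact (i); a pair both in $R_2$ lies in $S_l(i+1)\cap R_2\subseteq S_l(i+1)$, a clique; for a mixed pair $u\in R_1$, $v\in R_2$, I would compare $j^\ast$, the unique step at which $v$ enters $S_l'$ (it never leaves), with $j^{\ast\ast}$, the first step of the last run of consecutive steps on which $u$ stays in $S_l'$. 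Using (a)/(b) already established for the relevant step, any point that enters $S_l'$ at a step lies in $S_l$ of that step; so whichever of $u,v$ entered later did so at a step where either $\{u,v\}\subseteq S_l(\cdot)$ (in particular when both entered at the same step, or at a step $\le h$ where $S_l'=S_l$), or the deletion rule at that step tested the $R_1$-point $u$ against the arriving $R_2$-point $v$ --- and since $u$ survives all the way to step $i+1$, the only consistent outcome is $|uv|\le 1$.

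I expect the main obstacle to be the mixed $R_1$/$R_2$ case of part (c). The deletion rule is triggered only by the arrival of new $R_2$-points, so the argument must show that by the time a given $R_1$-point and a given $R_2$-point coexist in $S_l'$, they have already ``met'' in some common $S_l(j)$ (which is a clique by Observation~\ref{obs:up_clique}), or else the $R_1$-point would have been pruned; tracking the insertion/deletion steps of an individual point carefully is the delicate part. The diameter bound for $R_1$ in fact (i) is short to prove but is the essential geometric ingredient that allows the enlarged set $S_l'(i)$ to remain cobipartite while keeping the number of updates linear.
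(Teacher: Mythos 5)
Your proposal is correct, but the clique-property part follows a genuinely different route from the paper. The paper argues incrementally: starting from the clique $S_l'(i)$, it shows each insertion of a point $p\in S_l(i+1)$ preserves the clique property, because conflicting $R_1$-points are deleted by the rule and any $R_2$-point already present lies in $S_l(i+1)$ (via Lemma~\ref{obs:IIpoints}) and is therefore automatically compatible with $p$; no case analysis on pairs and no extra geometry is needed. You instead verify the clique property pair by pair, split by region, which forces you to introduce two ingredients the paper does not use: the strengthened induction hypothesis $S_l'(i)\cap R_2=S_l(i)\cap R_2$ (the paper only needs the one-directional inclusion, derived on the fly), and the geometric fact $\mathrm{diam}(R_1)\le 1$ via the Reuleaux-triangle containment, which I checked and which is correct. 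That diameter bound is a nice dividend: it disposes of the $R_1$--$R_1$ case outright and, unlike the paper's own write-up, explains cleanly why the deletion test only needs to fire when the \emph{arriving} point lies in $R_2$ (a newly inserted $R_1$-point can never conflict with a stale $R_1$-point). The cost is the mixed $R_1$/$R_2$ case, which you correctly identify as the delicate step and only sketch; the sketch does close --- defining $j^{\ast\ast}$ as the start of $u$'s last run of membership sidesteps any need for a separate no-reinsertion argument, since either $u$ is continuously present at step $j^{\ast}$ and survives the test against $v$, or $u$ enters at $j^{\ast\ast}>j^{\ast}$ and then $u,v\in S_l(j^{\ast\ast})$ by your $R_2$-identity and monotonicity fact --- but writing it out fully would make your proof noticeably longer than the paper's.
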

\begin{proof}
First of all, by definition,  $S_l'(i)=S_l(i)$ for all $i\in [1,h]$. Therefore, the lemma statement obviously holds for all $i\in [1,h]$. We now prove the lemma for $i\in [h,t]$ by induction. Assuming that $S_l(i)\subseteq S_l'(i)$ and $S_l'(i)$ is a clique for some $i\in [h,t]$ (which is true when $i=h$), we prove below that $S_l(i+1)\subseteq S_l'(i+1)$ and $S_l'(i+1)$ is a clique. 

Let $S''=S_l(i+1)\cup S_l'(i)$. Let $S'$ be the set of points of $p'\in S''$ such that $p'\in R_1$ and $|p'p|>1$ for some point $p\in S_l(i+1)\setminus S_l'(i)$. By definition, $S_l'(i+1)=S''\setminus S'$. 

Clearly, $S_l(i+1)\subseteq S''$. We claim that $S_l(i+1)\cap S'=\emptyset$. Indeed, consider a point $p'\in S'$. By definition of $S'$, there exists a point $p\in S_l(i+1)$ with $|pp'|>1$. Since $p\in S_l(i+1)$ and $S_l(i+1)$ is a clique, $p'$ cannot be in $S_l(i+1)$. Therefore, $S_l(i+1)\cap S'=\emptyset$. As $S_l'(i+1)=S''\setminus S'$, we obtain that $S_l(i+1)\subseteq S_l'(i+1)$. 

To show that $S_l'(i+1)$ is a clique, recall that to define $S_l'(i+1)$, we first set $S_l'(i+1)=S'_l(i)$, which is a clique by definition. Then, whenever we add a point $p\in S_l(i+1)$ to $S_l'(i+1)$, we delete all points $p'$ in the current $S'_l(i+1)$ with $p'\in R_1$ and $|pp'|>1$. Also notice that for any point $p'$ in the current $S'_l(i+1)$ with $p'\in R_2$, 
by Lemma~\ref{obs:IIpoints}, $p'\in S_l(i+1)$ and thus $|pp'|\leq 1$ since $p\in S_l(i+1)$ and $S_l(i+1)$ is a clique. Therefore, after $p$ is inserted into $S_l'(i+1)$ and points $p'\in S_l'(i+1)$ with $|pp'|>1$ are deleted, $S'_l(i+1)$ is still a clique. This proves that $S_l'(i+1)$ is a clique. 
\end{proof}

The following lemma explains why $S_l'(i)$ is a better alternative to $S_l(i)$.

\begin{lemma}
\label{lem:relaxed_loClique}
    Any point $p\in P$ can be inserted into or deleted from $S'_l(i)$ at most $O(1)$ times for all $1\leq i\leq t$.
\end{lemma}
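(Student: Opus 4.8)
The plan is to bound separately the number of insertions and the number of deletions of any fixed point $p\in P$ over the whole scan $i=1,\ldots,t$, splitting the range into the two phases $[1,h]$ and $[h,t]$. For the phase $i\in[1,h]$ we have $S_l'(i)=S_l(i)$ by definition, so Lemma~\ref{lem:lowerfirsthalf} already tells us $p$ is deleted at most once in this phase; moreover, since in this phase the set $S_l(i)$ is governed purely by the predicates $x(p)\le x(p_i)$ and $|pp_i|\le 1$, and once $p$ leaves it never returns, $p$ is inserted at most once as well (the first time $x(p)\le x(p_i)$ and $|pp_i|\le 1$ both hold). So the whole prefix phase contributes $O(1)$ insertions and deletions.

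The substance is the suffix phase $i\in[h,t]$. Here I would use the case analysis already set up: every point of $P_l$ in the right half of $D_{p^*}$ lies in $R_1$ or $R_2$. First handle $p\in R_2$: by Lemma~\ref{obs:IIpoints}, once $p$ is inserted into $S_l(i)$ (equivalently, the first $i\ge h$ with $x(p)<x(p_i)$ and $|pp_i|\le1$, which in fact holds for all larger $p_i$ automatically) it stays in $S_l(j)\subseteq S_l'(j)$ for all larger $j$; and by the construction of $S_l'$ a point is only ever deleted from $S_l'(i+1)$ when it lies in $R_1$, so an $R_2$ point is never deleted in the suffix phase. Hence an $R_2$ point incurs at most one insertion and zero deletions in $[h,t]$. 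For $p\in R_1$, the key is that $p$ can be deleted from $S_l'$ at most once in the suffix phase: a deletion of $p\in R_1$ happens only when some $p''\in S_l(i+1)\setminus S_l'(i)\subseteq R_2\cup R_1$... more precisely when some newly inserted lower-clique point $p''$ has $|pp''|>1$. I would argue that if $p$ is deleted at step $i+1$ because of such a $p''$, then $p\notin S_l(j)$ for all $j>i$ — because $p''\in S_l(i+1)$ forces (by the convex-position / right-turn geometry, mirroring Lemma~\ref{lem:upClique}(2) and Lemma~\ref{lem:lowerfirsthalf}) that $p''$ stays within distance $1$ of all later $p_j$ on $P_u$, while $|pp''|>1$ means $p$ and $p''$ cannot coexist in the clique $S_l(j)$; since $p''\in S_l(j)$, we get $p\notin S_l(j)$, and once $p$ is out of $S_l$ it is never re-inserted into $S_l'$ in the suffix because insertions into $S_l'$ come only from $S_l$. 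Therefore each $R_1$ point is deleted at most once in $[h,t]$, and since it is inserted (into $S_l'$) only when it enters $S_l$, and re-entry is impossible after a deletion, it is also inserted at most once. Summing the two phases gives the $O(1)$ bound.

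The main obstacle I anticipate is making rigorous the claim ``if $p\in R_1$ is deleted at step $i+1$, then $p$ never re-enters $S_l'$'' — i.e., ruling out the pathological alternation of Figure~\ref{fig:ex_O(n)ops}. The deletion is triggered by a concrete witness $p''\in S_l(i+1)\cap R_2$ (it must be in $R_2$, since the construction only deletes $R_1$ points when the new point $p$ lies in $R_2$), and one must show this same $p''$ forbids $p$ from every later $S_l(j)$. This needs the monotonicity of $S_l$-membership for an $R_2$ point (Lemma~\ref{obs:IIpoints}) to guarantee $p''\in S_l(j)$ for all $j\in(i,t]$, together with the fact that $S_l(j)$ is a clique (Observation~\ref{obs:up_clique}) so $|pp''|>1$ excludes $p$. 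Once $p\notin S_l(j)$ for all $j>i$, no future step adds $p$ to $S_l'$, because new additions to $S_l'(j+1)$ are drawn from $S_l(j+1)\setminus S_l'(j)$; so $p$ is gone for good. I would write this out carefully, then assemble the four sub-bounds (prefix insertions/deletions, suffix $R_1$/$R_2$ cases) into the statement.
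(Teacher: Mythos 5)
Your proposal is correct and follows essentially the same route as the paper: handle $i\in[1,h]$ via Lemma~\ref{lem:lowerfirsthalf}, and for $i\in[h,t]$ observe that every deletion of a point from $S'_l$ is witnessed by an $R_2$ point at distance greater than $1$ which, by Lemma~\ref{obs:IIpoints}, remains present forever and thus blocks any re-insertion. The only cosmetic difference is that you close the argument using the fact that $S_l(j)$ is a clique together with ``insertions into $S'_l$ come only from $S_l$,'' whereas the paper invokes directly that $S'_l(j)$ is a clique (Lemma~\ref{lem:lowerclique}); both are valid.
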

\begin{proof}
First of all, by Lemma~\ref{lem:lowerfirsthalf}, any point $p\in P$ can be deleted from $S'_l(i)=S_l(i)$ at most once for all $i\in [1,h]$, implying that $p$ can be inserted into $S'_l(i)$ at most once for all $i\in [1,h]$.

For $i\in (h,t]$, by our definition, whenever a point $p'$ is deleted from $S'_l(i)$, there must be a point $p\in S'_l(i)$ such that $p\in R_2$ and $|pp'|>1$. Since $p\in R_2$, $p$ will always be in $S'_l(j)$ for all $j\in (i,t]$ by Lemma~\ref{obs:IIpoints}. This means that $p'$ can never be inserted into $S'_l(j)$ again for all $j\in (i,t]$ since $S_l'(j)$ is a clique by Lemma~\ref{lem:lowerclique}. Therefore, each point will be deleted from $S'_l(i)$ at most once for all $i\in [h,t]$, implying that each point can be inserted into $S'_l(i)$ at most once for all $i\in (h,t]$. 

Combining the above, each point of $P$ can be inserted into or deleted from $S'_l(i)$ at most $O(1)$ times for all $1\leq i\leq t$.
\end{proof}

\begin{definition}
   Define $S'(i)$ to be $S_u(i)\cup S_l'(i)$ for all $1\leq i\leq t$.
\end{definition}

Lemmas~\ref{lem:upClique} and \ref{lem:relaxed_loClique} together lead to the following. 

\begin{lemma}\label{lem:combine}
    Any point $p\in P$ can be inserted into or deleted from $S'(i)$ at most $O(1)$ times for all $1\leq i\leq t$.
\end{lemma}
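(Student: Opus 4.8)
The plan is to derive Lemma~\ref{lem:combine} directly from the two accounting lemmas already established, namely Lemma~\ref{lem:upClique} (each point enters or leaves $S_u(i)$ at most once over all $1\le i\le t$) and Lemma~\ref{lem:relaxed_loClique} (each point enters or leaves $S_l'(i)$ at most $O(1)$ times over all $1\le i\le t$). The key structural fact is the definition $S'(i)=S_u(i)\cup S_l'(i)$, together with the global decomposition $P=P_u\cup P_l$ into the upper and lower hulls: a point of $P_u$ can only ever belong to $S_u(i)$ (never to $S_l(i)$ or $S_l'(i)$, since those are subsets of $P_l$), and symmetrically a point of $P_l$ can only ever belong to $S_l'(i)$. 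Hence for any fixed $p\in P$, the indicator ``$p\in S'(i)$'' coincides exactly with ``$p\in S_u(i)$'' if $p\in P_u$, and with ``$p\in S_l'(i)$'' if $p\in P_l$.

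First I would fix an arbitrary point $p\in P$ and split into the two cases according to whether $p\in P_u$ or $p\in P_l$. In the case $p\in P_u$: since $S_l'(i)\subseteq P_l$ for every $i$, we have $p\notin S_l'(i)$ for all $i$, so $p\in S'(i)$ if and only if $p\in S_u(i)$; consequently an insertion (resp.\ deletion) of $p$ into (resp.\ from) $S'(i)$ is precisely an insertion (resp.\ deletion) of $p$ into (resp.\ from) $S_u(i)$, and by Lemma~\ref{lem:upClique} this happens at most once each over all $1\le i\le t$, hence $O(1)$ times total. In the case $p\in P_l$: by definition $S_u(i)\subseteq P_u$ for every $i$ (recall $p^*,\hat p\in P_u$ and $S_u(i)$ consists of points of $P_u$), so $p\notin S_u(i)$ for all $i$, whence $p\in S'(i)$ if and only if $p\in S_l'(i)$; thus each insertion or deletion of $p$ with respect to $S'(i)$ is exactly an insertion or deletion with respect to $S_l'(i)$, and Lemma~\ref{lem:relaxed_loClique} bounds these by $O(1)$ over all $1\le i\le t$. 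In both cases $p$ is inserted into or deleted from $S'(i)$ at most $O(1)$ times, which is the claim.

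One subtlety worth spelling out is that a ``change'' to $S'(i)$ as $i$ advances is, by the definitions of insertion and deletion given in Section~\ref{sec:update}, determined pointwise: $p$ is inserted into $S'(i)$ iff $p\in S'(i)\setminus S'(i-1)$ and deleted iff $p\in S'(i-1)\setminus S'(i)$. Because membership of a fixed $p$ in $S'(i)$ is governed entirely by a single one of the two component sets (depending on which hull $p$ lies on), the sequence of membership values of $p$ in $S'(i)$ as $i$ runs from $1$ to $t$ is identical to its sequence of membership values in that one component; therefore the number of ``transitions'' (which is exactly the number of insertions plus deletions) is the same. No interaction between $S_u$ and $S_l'$ can create extra transitions for a fixed point, precisely because no point lies in both $P_u$ and $P_l$.

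I do not anticipate a genuine obstacle here: the lemma is a short bookkeeping consequence of the disjoint hull decomposition and the two prior lemmas. The only point requiring a little care is making explicit that $P_u$ and $P_l$ partition $P$ and that $S_u(i)\subseteq P_u$, $S_l'(i)\subseteq P_l$, so that the membership of each point is routed to exactly one component; once that is stated, the $O(1)$ bound follows by simply adding the $O(1)$ bounds from Lemma~\ref{lem:upClique} and Lemma~\ref{lem:relaxed_loClique}. I would also remark, for later use, that the total number of updates to $S'(i)$ over the whole scan is thus $O(n)$, since summing the per-point $O(1)$ bound over the $n$ points of $P$ gives $O(n)$.
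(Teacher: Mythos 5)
Your proposal is correct and takes essentially the same route as the paper, which states Lemma~\ref{lem:combine} as an immediate consequence of Lemmas~\ref{lem:upClique} and~\ref{lem:relaxed_loClique}; your write-up merely makes explicit the (correct) bookkeeping that $S_u(i)\subseteq P_u$ and $S_l'(i)\subseteq P_l$ partition the membership of each point into exactly one component. No gaps.
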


\subsection{Algorithm implementation and time analysis}
\label{sec:imple}

Based on the above discussions, we implement our algorithm as follows. Initially when $i=1$, we compute $S'(i)=S(i)=S_l(i)\cup S_u(i)$, which consists of a single point $p_1$. Therefore, $\{p_1\}$ is a maximum clique of $G(S'(i))$. In general, assume that we have computed $S'(i)$, whose graph $G(S'(i))$ is cobipartite, and its maximum clique $M(i)$. Then, we compute $S'(i+1)$ and its maximum clique $M(i+1)$ of $S'(i+1)$ as follows. Depending on whether $i+1\leq h$, there are two cases. 

\paragraph{The case $\boldsymbol{i+1\leq h}$.}
If $i+1\leq h$, then we first compute $S'(i+1)=S(i+1)=S_l(i+1)\cup S_u(i+1)$. Note that $S_l(i+1)$ and $S_u(i+1)$  can be easily computed in $O(n)$ time by checking every point of $P$. We then compute the set $I(i+1)=S'(i+1)\setminus S'(i)$ and the set $D(i+1)=S'(i)\setminus S'(i+1)$. Notice that in order to obtain $S'(i+1)$ from $S'(i)$, $I(i+1)$ is the set of points to be inserted into $S'(i+1)$ while $D(i+1)$ is the set of points to be deleted from $S'(i)$. Given $S'(i)$ and $S'(i+1)$, the two sets $I(i+1)$ and $D(i+1)$ can be easily computed in $O(n\log n)$ time. Let $S=S'(i)$. Recall that we already know a maximum clique of $S$. To compute a maximum clique of $S'(i+1)$, we use the EE data structure~\cite{ref:EppsteinIt94} to maintain the maximum clique of $S$, and perform the following sequence of updates to $S$: First delete the points of $D(i+1)$ from $S$ one by one in any arbitrary order and then insert the points of $I(i+1)$ into $S$ one by one in any arbitrary order. Observe that at any moment of the sequence of the updates, $S$ is either a subset of $S'(i)$ or a subset of $S'(i+1)$. As both $G(S'(i))$ and $G(S'(i+1))$ are cobipartite, $G(S)$ is always cobipartite during the above sequence of updates. Hence, the EE data structure can compute a maximum clique after each update in $O(n\log n)$ time. As such, the total time for computing a maximum clique $M(i+1)$ of $S'(i+1)$ is $O((|I(i+1)|+|D(i+1)|+1)\cdot n\log n)$. Note that the ``$+1$'' is because we spend $O(n\log n)$ time computing $S'(i+1)$, $I(i+1)$, and $D(i+1)$. 

\paragraph{The case $\boldsymbol{i+1> h}$.}
If $i+1> h$, then we have $S'(i)=S_u(i)\cup S_l'(i)$ and $S'(i+1)=S_u(i+1)\cup S_l'(i+1)$. We first compute $S_u(i+1)$. Then, we compute $I_u(i+1)=S_u(i+1)\setminus S_u(i)$ and $D_u(i+1)=S_u(i)\setminus S_u(i+1)$. 
Next, we compute $S_l(i+1)$ and $I_l(i+1)=S_l(i+1)\setminus S_l'(i)$. Subsequently, for each point $p\in I_l(i+1)$, we check every point $p'\in S'_l(i)\cap R_1$ and add $p'$ to $D_l(i+1)$ if $|pp'|>1$. By definition, $S'(i+1)$ can be obtained from $S'(i)$ by deleting the points of $D(i+1)=D_l(i+1)\cup D_u(i+1)$ and inserting the points of $I(i+1)=I_l(i+1)\cup I_u(i+1)$. 

Note that computing $S_u(i+1)$, $S_l(i+1)$, $I_u(i+1)$, $D_u(i+1)$, and $I_l(i+1)$ can be easily done in $O(n\log n)$ time. For each point $p\in I_l(i+1)$, we can spend $O(n)$ time checking every point $p'\in S_l'(i)$ to determine whether $p'$ should be added to $D_l(i+1)$. Hence, computing $D_l(i+1)$ can be done in $O(|I_l(i+1)|\cdot n)$ time.

As above, let $S=S'(i)$. Since we already know a maximum clique of $S$, to compute a maximum clique of $S'(i+1)$, we use the EE data structure to maintain the maximum clique of $S$, and perform the following sequence of updates to $S$: First delete the points of $D(i+1)$ from $S$ and then insert the points of $I(i+1)$ into $S$. 
As in the above case, $G(S)$ is always cobipartite after each update. Hence, the EE data structure can compute a maximum clique after each update in $O(n\log n)$ time. As such, the total time for computing a maximum clique $M(i+1)$ of $S'(i+1)$ is $O((|I(i+1)|+|D(i+1)|+1)\cdot n\log n)$. 
\bigskip


To summarize, in either case, computing a maximum clique $M(i+1)$ for each $S'(i+1)$ takes $O((|I(i+1)|+|D(i+1)|+1)\cdot n\log n)$ time. By Lemma~\ref{lem:combine}, we have $\sum_i(|I(i+1)|+|D(i+1)|)=O(n)$. Hence, the total time of the algorithm is $O(n^2\log n)$. We thus obtain the following theorem. 

\begin{theorem}\label{thm:convexgivenpoint}
Given a set $P$ of $n$ points in convex position in the plane, if a point in a maximum clique of $G(P)$ is provided, then a maximum clique of $G(P)$ can be computed in $O(n^2 \log n)$ time.
\end{theorem}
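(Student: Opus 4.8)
The plan is to assemble the pieces developed in Sections~\ref{sec:define}--\ref{sec:update} into a single scan over the upper-hull points $p_1,\dots,p_t$, maintaining the set $S=S'(i)$ together with the Eppstein--Erickson (EE) dynamic data structure on the cobipartite graph $G(S)$. First I would argue correctness: since $\hat q=p_i$ for some $i$, and every point $p\in M^*$ satisfies $x(p)\le x(p_i)$ and $|pp_i|\le 1$, we have $M^*\subseteq S(i)\subseteq S'(i)=S_u(i)\cup S_l'(i)$ by Lemma~\ref{lem:lowerclique}. Because $S_u(i)$ and $S_l'(i)$ are each cliques (Observation~\ref{obs:up_clique} and Lemma~\ref{lem:lowerclique}), $G(S'(i))$ is cobipartite, so a maximum clique of $G(S'(i))$ is a genuine maximum clique of the induced subgraph, and since $M^*\subseteq S'(i)$, the clique computed at step $i$ has size at least $|M^*|=K$. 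Returning the largest clique found over all $i$ (and over the two symmetric runs, one assuming $\hat q$ on the upper hull and one on the lower hull) therefore yields a maximum clique of $G(P)$; it is never larger than $K$ since every clique found is a clique of $G(P)$.

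Next I would handle the running time. The scan has $t=O(n)$ iterations. For each iteration I would describe exactly the update performed, splitting on whether $i+1\le h$ or $i+1>h$ as in Section~\ref{sec:imple}: in both cases $S'(i+1)$ is obtained from $S'(i)$ by deleting a set $D(i+1)$ and inserting a set $I(i+1)$, and the bookkeeping to compute these sets costs $O(n\log n)$ plus, in the second case, $O(|I_l(i+1)|\cdot n)$ for computing $D_l(i+1)$ (which is absorbed into the charge below). At every intermediate moment during the deletions-then-insertions, $S$ is a subset of either $S'(i)$ or $S'(i+1)$, hence a subset of a cobipartite graph, hence itself cobipartite, so each of the $|I(i+1)|+|D(i+1)|$ updates is a legal EE update costing $O(n\log n)$, after which the EE structure reports a maximum clique of the current $S$. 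Thus iteration $i+1$ costs $O\bigl((|I(i+1)|+|D(i+1)|+1)\cdot n\log n\bigr)$. Summing, the $+1$ terms give $O(n^2\log n)$, and by Lemma~\ref{lem:combine} each point of $P$ is inserted into or deleted from $S'(\cdot)$ only $O(1)$ times over the whole scan, so $\sum_i\bigl(|I(i+1)|+|D(i+1)|\bigr)=O(n)$, contributing another $O(n^2\log n)$; the total is $O(n^2\log n)$.

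The main obstacle — which is really the content that Sections~\ref{sec:define}--\ref{sec:update} were built to overcome — is guaranteeing the $O(n)$ total update bound while keeping every intermediate $G(S)$ cobipartite. The naive choice $S'(i)=S(i)$ fails because, as the example around Figure~\ref{fig:ex_O(n)ops} shows, a single point of $P_l$ can oscillate in and out of $S_l(i)$ $\Omega(n)$ times as $i$ ranges over $[h,t]$. The fix is the enlarged lower clique $S_l'(i)$: Lemma~\ref{obs:IIpoints} shows that a point of $P_l$ lying in the region $R_2$, once inserted, never leaves, so the deletion rule "delete $p'\in R_1$ from $S_l'(i+1)$ whenever some newly inserted $p\in R_2$ has $|pp'|>1$" is monotone — a deleted point can never return (Lemma~\ref{lem:relaxed_loClique}). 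I would make sure to verify the two subtle points that keep the whole scheme consistent: that $S_l'(i)$ remains a clique after such deletions (using Lemma~\ref{obs:IIpoints} to see that the surviving $R_2$-points are automatically within distance $1$ of any inserted point, since they all lie in the clique $S_l(i+1)$), and that the order of insertions within one step is irrelevant because no freshly inserted point is ever deleted in the same step. With Lemmas~\ref{lem:upClique} and~\ref{lem:relaxed_loClique} in hand, Lemma~\ref{lem:combine} follows, and the time bound drops out.
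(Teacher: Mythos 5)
Your proposal follows the paper's own proof essentially verbatim: the same left-to-right scan over $p_1,\dots,p_t$ maintaining the cobipartite set $S'(i)=S_u(i)\cup S_l'(i)$ with the EE data structure, the same correctness argument via $M^*\subseteq S(i)\subseteq S'(i)$ for the index $i$ with $\hat{q}=p_i$ (plus the two symmetric runs), and the same charging argument using Lemma~\ref{lem:combine} to bound the total number of updates by $O(n)$ and hence the running time by $O(n^2\log n)$. The argument is correct and complete as a synthesis of Sections~\ref{sec:define}--\ref{sec:imple}.
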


Note that for any point $p^*\in P$, our algorithm guarantees to find a clique no smaller than any clique containing $p^*$. Indeed, suppose that $M$ is a clique of $G(P)$ that contains $p^*$. Then, our algorithm guarantees $M\subseteq S'(i)$ for some $i\in [1,t]$. Since the algorithm finds a maximum clique for $S'(i)$ for each $i\in [1,t]$, it can always find a clique of size at least $|M|$. Therefore, we have the following more general result. 

\begin{theorem}\label{thm:convexgivenpointgeneral}
Let $P$ be a set of $n$ points in convex position in the plane and $p^*$ is a point of $P$. Suppose that $k$ is the size of a clique of $G(P)$ that contains $p^*$. Then, we can compute a clique whose size is at least $k$ in $O(n^2 \log n)$ time.
\end{theorem}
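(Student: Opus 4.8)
The plan is to observe that Theorem~\ref{thm:convexgivenpointgeneral} follows from the algorithm of Theorem~\ref{thm:convexgivenpoint} essentially without modification: I would run exactly that algorithm, taking the given point $p^*$ as the designated point, and simply not use the hypothesis that $p^*$ lies in a maximum clique. So the ``proof'' is really an inspection of which hypotheses the earlier argument actually consumes.

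First I would note that the running-time analysis of Theorem~\ref{thm:convexgivenpoint} is unaffected. Nowhere in Sections~\ref{sec:define}--\ref{sec:imple} --- neither in the definitions of $S_u(i)$, $S_l(i)$, $S_l'(i)$, $S'(i)$, nor in Lemmas~\ref{lem:upClique}, \ref{lem:lowerfirsthalf}, \ref{obs:IIpoints}, \ref{lem:lowerclique}, \ref{lem:relaxed_loClique}, \ref{lem:combine}, nor in the update procedure --- is any optimality property of $p^*$ invoked; the arguments only use that $P$ is in convex position, that $p^*$ is a point of $P$, and the general position assumption. Consequently the bounds $\sum_i(|I(i+1)|+|D(i+1)|)=O(n)$ and $O(n\log n)$ per update still hold, so the total time remains $O(n^2\log n)$. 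As in the original algorithm, since we do not know a priori whether the relevant rightmost point lies on the upper or the lower hull, we run two symmetric versions and return the largest clique found across all $i\in[1,t]$ in both runs.

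Next I would establish correctness, i.e.\ that the returned clique has size at least $k$. Let $M$ be a clique of $G(P)$ containing $p^*$ with $|M|=k$. Since $p^*\in M$, every point of $M$ lies in $D_{p^*}$, so restricting to $P(D_{p^*})$ (as the algorithm does) loses nothing. After the rotation making $p^*$ the leftmost point, let $\hat q$ be the rightmost point of $M$; it lies on the upper hull or the lower hull of $P$, and by symmetry assume it is on the upper hull, say $\hat q=p_i$. Then every point $p\in M$ satisfies $x(p)\le x(p_i)$ and $|pp_i|\le 1$, so $M\subseteq S_u(i)\cup S_l(i)=S(i)$. By Lemma~\ref{lem:lowerclique}, $S_l(i)\subseteq S_l'(i)$, hence $S(i)\subseteq S_u(i)\cup S_l'(i)=S'(i)$, i.e.\ $M\subseteq S'(i)$. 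Since the algorithm computes a maximum clique $M(i)$ of the cobipartite graph $G(S'(i))$, we get $|M(i)|\ge|M|=k$, and since the algorithm returns the largest clique found over all $i$ and both versions, its output has size at least $k$.

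The only point to be careful about --- and it is minor --- is confirming that every step of the Theorem~\ref{thm:convexgivenpoint} algorithm stays well-defined when $p^*$ is an arbitrary hull vertex rather than a vertex of a maximum clique: the normalization (discarding points outside $D_{p^*}$, rotating $p^*$ to be leftmost, the general-position perturbation) is purely geometric and uses only that $p^*\in P$ and $P$ is in convex position, and the cobipartiteness of each $G(S'(i))$ follows from Observation~\ref{obs:up_clique} and Lemma~\ref{lem:lowerclique} exactly as before. Thus there is no real obstacle; the statement is a direct strengthening obtained by relaxing an unused hypothesis in the proof of Theorem~\ref{thm:convexgivenpoint}.
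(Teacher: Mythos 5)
Your proposal is correct and matches the paper's own argument: the paper proves Theorem~\ref{thm:convexgivenpointgeneral} by exactly the same observation, namely that for any clique $M$ containing $p^*$ the algorithm of Theorem~\ref{thm:convexgivenpoint} guarantees $M\subseteq S'(i)$ for some $i$, so the maximum clique it computes for that $S'(i)$ has size at least $|M|$. Your additional remarks (that no lemma uses optimality of $p^*$, and that two symmetric runs handle the upper/lower hull cases) are consistent with the paper and just make explicit what it leaves implicit.
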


\section{The convex position case without a given point}
\label{sec:convexnopoint}

We now consider convex position case problem without assuming a point in a maximum clique is given. We present a randomized algorithm for the problem by combining the results in Theorems~\ref{thm:k-clique} and \ref{thm:convexgivenpoint}. 


Let $K$ denote the maximum clique size of $G(P)$. We first apply the algorithm of Lemma~\ref{lem:decision} with $k=n^{6/7}$, which can determine whether $K<n^{6/7}$ in $O(n^{15/7+o(1)})$ time. Depending on whether $K < n^{6/7}$, we run two different algorithms. 

If $K < n^{6/7}$, then we run the algorithm of Theorem~\ref{thm:k-clique}, which can find a maximum clique in $O(n\log n+ nK^{4/3+o(1)})$ time. The time complexity is thus bounded by $O(n^{15/7+o(1)})$. 

If $K\geq n^{6/7}$, then we randomly pick a point $p^*$ from $P$ and run the algorithm in Theorem~\ref{thm:convexgivenpoint}. Let $M^*$ be a maximum clique of $G(P)$. 
As $K\geq n^{6/7}$, the probability that $p^*\in M^*$ is at least $1/n^{1/7}$. 
We repeat the same process for $cn^{1/7}\log n$ times for a sufficiently large constant $c$, and return the largest clique we find. 
The probability that at least one selected point belongs to $M^*$ is at least 
$$
1- \left(1 - \frac{1}{n^{1/7}}\right)^{cn^{1/7}\log n} \geq 1 - e^{-c\log n}.
$$

Hence, with high probability, one of the $O(n^{1/7}\log n)$ randomly selected points is in $M^*$ and therefore the algorithm of Theorem~\ref{thm:convexgivenpoint} when applied to the point will compute a maximum clique of $G(P)$. As such, with high probability, our algorithm can find a maximum clique. The runtime of the algorithm is bounded by $O(n^2\log n\cdot n^{1/7}\log n)$ in the worst case, which is $O(n^{15/7} \log^2 n)$.

Combining the above, the time complexity of the overall algorithm is $O(n^{15/7+o(1)})$ in the worst case. The following theorem summarizes the result.

\begin{theorem}
Given a set $P$ of $n$ points in convex position in the plane, we can find a clique in the unit-disk graph $G(P)$ in $O(n^{15/7+o(1)})$ time, and with high probability the clique is a maximum clique of $G(P)$.
\end{theorem}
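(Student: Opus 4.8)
The plan is to combine Theorems~\ref{thm:k-clique} and \ref{thm:convexgivenpoint} via a threshold on the (unknown) maximum clique size $K$, using Lemma~\ref{lem:decision} to decide which regime we are in. First I would run the decision algorithm of Lemma~\ref{lem:decision} with the threshold parameter $k=n^{6/7}$; by that lemma this takes $O(n\cdot(n^{6/7})^{4/3+o(1)})=O(n^{15/7+o(1)})$ time, after the one-time $O(n\log n)$ preprocessing. This single call tells us whether $K<n^{6/7}$ or $K\ge n^{6/7}$, and we branch on the answer.

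In the small-clique regime $K<n^{6/7}$, I would simply invoke Theorem~\ref{thm:k-clique}, which finds a maximum clique in $O(n\log n+nK^{4/3+o(1)})$ time; since $K<n^{6/7}$ this is $O(n^{15/7+o(1)})$, as required. In the large-clique regime $K\ge n^{6/7}$, the idea is random sampling: pick a point $p^*\in P$ uniformly at random and run the algorithm of Theorem~\ref{thm:convexgivenpoint} pretending $p^*$ lies in a maximum clique $M^*$. If $p^*\in M^*$, Theorem~\ref{thm:convexgivenpoint} returns a genuine maximum clique; the probability of this event is $|M^*|/n=K/n\ge n^{-1/7}$. Repeating the sampling $cn^{1/7}\log n$ times for a suitably large constant $c$ and keeping the largest clique found, the failure probability is at most $(1-n^{-1/7})^{cn^{1/7}\log n}\le e^{-c\log n}=n^{-c}$, so with high probability at least one trial hits $M^*$ and we output a maximum clique. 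Each trial costs $O(n^2\log n)$ by Theorem~\ref{thm:convexgivenpoint}, so this regime costs $O(n^{15/7}\log^2 n)=O(n^{15/7+o(1)})$ time in the worst case.

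Taking the worse of the two branches plus the cost of the initial decision call, the total running time is $O(n^{15/7+o(1)})$ in the worst case, and the returned clique is a maximum clique with probability at least $1-n^{-c}$ for an arbitrarily large constant $c$ chosen in advance. Since $\log^2 n$ and any polylogarithmic overhead are absorbed into the $n^{o(1)}$ factor, the bound $O(n^{15/7+o(1)})=O(n^{2.143})$ follows, establishing the theorem.

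I do not expect a genuine obstacle here: the only subtlety is making sure the probabilistic amplification is sound, i.e.\ that each independent trial with a fresh uniformly random $p^*$ independently succeeds with probability at least $n^{-1/7}$ whenever $K\ge n^{6/7}$, which is immediate because $M^*$ is a fixed set of size $K\ge n^{6/7}$ and $\Pr[p^*\in M^*]=K/n$. The balance point $k=n^{6/7}$ is forced by equating the two regime costs: $n\cdot k^{4/3}=n^2\cdot(n/k)$ gives $k^{7/3}=n^2$, i.e.\ $k=n^{6/7}$, and one just needs to note that Theorem~\ref{thm:convexgivenpoint} does not require knowing $K$, only the randomly guessed point $p^*$ — and even if the guess is wrong the algorithm still returns \emph{some} clique, so correctness is never violated, only the optimality guarantee, which is what ``with high probability'' accounts for.
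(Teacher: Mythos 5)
Your proposal is correct and follows essentially the same approach as the paper: a single call to Lemma~\ref{lem:decision} with threshold $k=n^{6/7}$, then Theorem~\ref{thm:k-clique} in the small-$K$ regime and $O(n^{1/7}\log n)$ independent random trials of Theorem~\ref{thm:convexgivenpoint} in the large-$K$ regime, with the identical probability amplification and time bounds. No gaps; the balance-point derivation you include is a nice sanity check but matches the paper's choice exactly.
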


\bibliographystyle{plainurl}
\bibliography{refs}
\end{document}